\newtheorem{theorem}{Theorem}
\newtheorem{definition}{Definition}
\newtheorem{lemma}{Lemma}
\newtheorem{proposition}{Proposition}
\newtheorem{example}{Example}
\def\bcj{\begin{conjecture}}
	\def\ecj{\end{conjecture}}
\def\bcr{\begin{corollary}}
	\def\ecr{\end{corollary}}
\def\bd{\begin{definition}}
	\def\ed{\end{definition}}
\def\bea{\begin{eqnarray}}
	\def\eea{\end{eqnarray}}
\def\bem{\begin{enumerate}}
	\def\eem{\end{enumerate}}
\def\bex{\begin{example}}
	\def\eex{\end{example}}
\def\bim{\begin{itemize}}
	\def\eim{\end{itemize}}
\def\bl{\begin{lemma}}
	\def\el{\end{lemma}}
\def\bma{\begin{bmatrix}}
	\def\ema{\end{bmatrix}}
\def\bpf{\begin{proof}}
	\def\epf{\end{proof}}
\def\bpp{\begin{proposition}}
	\def\epp{\end{proposition}}
\def\bqu{\begin{question}}
	\def\equ{\end{question}}
\def\br{\begin{remark}}
	\def\er{\end{remark}}
\def\bt{\begin{theorem}}
	\def\et{\end{theorem}}
\def\squareforqed{\hbox{\rlap{$\sqcap$}$\sqcup$}}
\def\qed{\ifmmode\squareforqed\else{\unskip\nobreak\hfil
		\penalty50\hskip1em\null\nobreak\hfil\squareforqed
		\parfillskip=0pt\finalhyphendemerits=0\endgraf}\fi}
\def\endenv{\ifmmode\;\else{\unskip\nobreak\hfil
		\penalty50\hskip1em\null\nobreak\hfil\;
		\parfillskip=0pt\finalhyphendemerits=0\endgraf}\fi}
\newenvironment{proof}{\noindent \textbf{{Proof.~} }}{\qed}
\def\Dbar{\leavevmode\lower.6ex\hbox to 0pt
	{\hskip-.23ex\accent"16\hss}D}
\def\url@leostyle{%
	\@ifundefined{selectfont}{\def\UrlFont{\sf}}{\def\UrlFont{\small\ttfamily}}}
\def\bcj{\begin{conjecture}}
	\def\ecj{\end{conjecture}}
\def\bcr{\begin{corollary}}
	\def\ecr{\end{corollary}}
\def\bd{\begin{definition}}
	\def\ed{\end{definition}}
\def\bea{\begin{eqnarray}}
	\def\eea{\end{eqnarray}}
\def\bem{\begin{enumerate}}
	\def\eem{\end{enumerate}}
\def\bex{\begin{example}}
	\def\eex{\end{example}}
\def\bim{\begin{itemize}}
	\def\eim{\end{itemize}}
\def\bl{\begin{lemma}}
	\def\el{\end{lemma}}
\def\bpf{\begin{proof}}
	\def\epf{\end{proof}}
\def\bpp{\begin{proposition}}
	\def\epp{\end{proposition}}
\def\bqu{\begin{question}}
	\def\equ{\end{question}}
\def\br{\begin{remark}}
	\def\er{\end{remark}}
\def\bt{\begin{theorem}}
	\def\et{\end{theorem}}
\def\btb{\begin{tabular}}
	\def\etb{\end{tabular}}
\newcommand{\nc}{\newcommand}
\nc{\bbA}{\mathbb{A}} \nc{\bbB}{\mathbb{B}} \nc{\bbC}{\mathbb{C}}
\nc{\bbD}{\mathbb{D}} \nc{\bbE}{\mathbb{E}} \nc{\bbF}{\mathbb{F}}
\nc{\bbG}{\mathbb{G}} \nc{\bbH}{\mathbb{H}} \nc{\bbI}{\mathbb{I}}
\nc{\bbJ}{\mathbb{J}} \nc{\bbK}{\mathbb{K}} \nc{\bbL}{\mathbb{L}}
\nc{\bbM}{\mathbb{M}} \nc{\bbN}{\mathbb{N}} \nc{\bbO}{\mathbb{O}}
\nc{\bbP}{\mathbb{P}} \nc{\bbQ}{\mathbb{Q}} \nc{\bbR}{\mathbb{R}}
\nc{\bbS}{\mathbb{S}} \nc{\bbT}{\mathbb{T}} \nc{\bbU}{\mathbb{U}}
\nc{\bbV}{\mathbb{V}} \nc{\bbW}{\mathbb{W}} \nc{\bbX}{\mathbb{X}}
\nc{\bbZ}{\mathbb{Z}}
\nc{\bA}{{\bf A}} \nc{\bB}{{\bf B}} \nc{\bC}{{\bf C}}
\nc{\bD}{{\bf D}} \nc{\bE}{{\bf E}} \nc{\bF}{{\bf F}}
\nc{\bG}{{\bf G}} \nc{\bH}{{\bf H}} \nc{\bI}{{\bf I}}
\nc{\bJ}{{\bf J}} \nc{\bK}{{\bf K}} \nc{\bL}{{\bf L}}
\nc{\bM}{{\bf M}} \nc{\bN}{{\bf N}} \nc{\bO}{{\bf O}}
\nc{\bP}{{\bf P}} \nc{\bQ}{{\bf Q}} \nc{\bR}{{\bf R}}
\nc{\bS}{{\bf S}} \nc{\bT}{{\bf T}} \nc{\bU}{{\bf U}}
\nc{\bV}{{\bf V}} \nc{\bW}{{\bf W}} \nc{\bX}{{\bf X}}
\nc{\ba}{{\bf a}} \nc{\be}{{\bf e}} \nc{\bu}{{\bf u}}
\nc{\brr}{{\bf r}}
\nc{\cA}{{\cal A}} \nc{\cB}{{\cal B}} \nc{\cC}{{\cal C}}
\nc{\cD}{{\cal D}} \nc{\cE}{{\cal E}} \nc{\cF}{{\cal F}}
\nc{\cG}{{\cal G}} \nc{\cH}{{\cal H}} \nc{\cI}{{\cal I}}
\nc{\cJ}{{\cal J}} \nc{\cK}{{\cal K}} \nc{\cL}{{\cal L}}
\nc{\cM}{{\cal M}} \nc{\cN}{{\cal N}} \nc{\cO}{{\cal O}}
\nc{\cP}{{\cal P}} \nc{\cQ}{{\cal Q}} \nc{\cR}{{\cal R}}
\nc{\cS}{{\cal S}} \nc{\cT}{{\cal T}} \nc{\cU}{{\cal U}}
\nc{\cV}{{\cal V}} \nc{\cW}{{\cal W}} \nc{\cX}{{\cal X}}
\nc{\cZ}{{\cal Z}}
\nc{\hA}{{\hat{A}}} \nc{\hB}{{\hat{B}}} \nc{\hC}{{\hat{C}}}
\nc{\hD}{{\hat{D}}} \nc{\hE}{{\hat{E}}} \nc{\hF}{{\hat{F}}}
\nc{\hG}{{\hat{G}}} \nc{\hH}{{\hat{H}}} \nc{\hI}{{\hat{I}}}
\nc{\hJ}{{\hat{J}}} \nc{\hK}{{\hat{K}}} \nc{\hL}{{\hat{L}}}
\nc{\hM}{{\hat{M}}} \nc{\hN}{{\hat{N}}} \nc{\hO}{{\hat{O}}}
\nc{\hP}{{\hat{P}}} \nc{\hR}{{\hat{R}}} \nc{\hS}{{\hat{S}}}
\nc{\hT}{{\hat{T}}} \nc{\hU}{{\hat{U}}} \nc{\hV}{{\hat{V}}}
\nc{\hW}{{\hat{W}}} \nc{\hX}{{\hat{X}}} \nc{\hZ}{{\hat{Z}}}
\nc{\hn}{{\hat{n}}}
\def\min{\mathop{\rm min}}
\newcommand{\ket}[1]{|#1\rangle}
\def \qed {\hfill \vrule height7pt width 7pt depth 0pt}
\newcounter{lastnote}
\begin{document}
	
	\title{ Graph connectivity based  strong quantum nonlocality with genuine entanglement   }
	\author{Yan-Ling Wang}
	\email{wangylmath@yahoo.com}
	\affiliation{ School of Computer Science and Techonology, Dongguan University of Technology, Dongguan, 523808, China}
	
	\author{Mao-Sheng Li}
	\email{li.maosheng.math@gmail.com}
	\affiliation{Department of Physics, Southern University of Science and Technology, Shenzhen, 518055, China}
	\affiliation{ Department of Physics, University of Science and Technology of China, Hefei, 230026, China}

	\author{Man-Hong Yung}
	\email{yung@sustc.edu.cn}
	\affiliation{Department of Physics, Southern University of Science and Technology, Shenzhen, 518055, China}
	\affiliation{Institute for Quantum Science and Engineering, and Department of Physics,
		Southern University of Science and Technology, Shenzhen, 518055, China}
	

	\begin{abstract}
		Strong nonlocality based on  local distinguishability is  a    stronger form of quantum nonlocality recently introduced  in multipartite quantum systems:  an orthogonal set of  multipartite  quantum states is said to be  of strong nonlocality if it is locally irreducible for every bipartition of the subsystems.  Most of the known  results   are limited to sets with product states. Shi et al. presented the first result of strongly nonlocal entangled sets  in [\href{https://journals.aps.org/pra/abstract/10.1103/PhysRevA.102.042202}{Phys. Rev. A \textbf{102},  042202 (2020)}] and there they questioned the existence of  strongly nonlocal  set with genuine entanglement. In this work, we  relate the  strong nonlocality of  some speical  set of  genuine entanglement  to  the connectivities of some  graphs. Using this relation, we successfully construct   sets  of genuinely entangled states with strong nonlocality.  As a consequence,  our constructions give a negative  answer to  Shi et al.'s question, which also provide another answer to the open   problem raised by Halder et al. [\href{https://journals.aps.org/prl/abstract/10.1103/PhysRevLett.122.040403}{Phys. Rev. Lett. \textbf{122}, 040403 (2019)}]. This work associates a physical quantity named strong nonlocality with a mathematical quantity called graph connectivity.
		
	\end{abstract}

	\maketitle
	\section{Introduction}
	Quantum nonlocality,  one of the most   surprising  property in  quantum mechanics,  is usually being detected  with entangled states by their violations of Bell-type
	inequalities. In addition, the local indistinguishability  of an  orthogonal  set of quantum states is also being widely used to illustrate the phenomenon of quantum nonlocality.  It is well known that an orthogonal set of quantum states can be perfectly distinguished by positive operation value measurement (POVM)\cite{nils}.   Bennett et al. \cite{Ben99} presented an example
	of  orthogonal product states in $\mathbb{C}^3\otimes  \mathbb{C}^3$ that are locally indistinguishable (here only local operations and classical communications are allowed).  Therefore, all the information of the given set can be inferred by using global measure, but only   partial   information can be deduced when only local measurement are allowed.   They named such a phenomenon as quantum nonlocality without entanglement. Since then, the quantum nonlocality based on local indistinguishability    has been studied extensively (see Refs. \cite{Gho01,Wal00,Wal02,Fan04,Nathanson05,Cohen07,Bandyopadhyay11,Li15,Fan07,Yu12,Cos13,Yu115,Wang19,Xiong19,Li20,Ran04,Hor03,Ben99b,DiVincenzo03,Zhang14,Zhang15,Zhang16,Xu16b,Xu16m,Zhang16b,Wang15,Feng09,
		Yang13,Zhang17,Halder18,Li18,Halder1909,Xu20b,Rout1909}  for an
	incomplete list).   Moreover, the local indistinguishability of  quantum states has also been practically applied in quantum cryptography primitives such as   data hiding \cite{Terhal01,DiVincenzo02} and secret sharing \cite{Markham08,Rahaman15,WangJ17}.

	In  each protocol that can perfectly distinguished the set of states, the states remain to be orthogonal to each other after every local measurement.  Measurements with such property are called orthogonality preserving measurement.   Based on this kind of measurement, Halder \emph{et al.} \cite{Halder19} introduced the concept local irreducibility, a  stronger form of local indistinguishability.   A set of multipartite orthogonal quantum states is said to be locally irreducible if it is not possible to locally eliminate one or more states from that set using orthogonality  preserving  measurement.  They presented   two sets of product states in $\mathbb{C}^3\otimes\mathbb{C}^3 \otimes\mathbb{C}^3$ and $\mathbb{C}^4\otimes\mathbb{C}^4  \otimes\mathbb{C}^4$ that are locally irreducible for each biparition of the corresponding tripartite systems. They named such phenomenon as strong nonlocality without entanglement.  After that, Yuan et al. \cite{Tian20}
	constructed some strongly nonlocal  sets without entanglement in higher dimensional systems $\mathbb{C}^d\otimes \mathbb{C}^d\otimes \mathbb{C}^d$ and even an example in $\mathbb{C}^3\otimes\mathbb{C}^3\otimes \mathbb{C}^3\otimes \mathbb{C}^3.$   Recently, Shi et al. \cite{Shi21} developed a strong method which helps them to  construct some strongly nonlocal orthogonal product sets in general 3,4,5-parties systems.  Based on the local irreducibility in some multipartitions, Zhang \emph{et al.} \cite{Zhang1906} generalized the concept of strong nonlocality to more general settings.     
	
	Intuitively, the more entanglement of a given set, the easier it is  to show the strong  nonlocality. However,  this  may not be the case. In fact,  Halder et al.   found examples of strong nonlocality without entanglement but they proposed the open question: are there  any orthogonal entangled bases that present the strong nonlocality (see Ref. \cite{Halder19})? Moreover, as a negative example, they showed that a special three-qubit GHZ basis   is locally reducible in all bipartitions. Soon after that,  Shi et al. \cite{Shi20S} provided a positive answer to the above question by constructing sets of strongly nonlocal entangled states (which are separable in some partition of the tripartite systems) in $\mathbb{C}^d\otimes\mathbb{C}^d\otimes\mathbb{C}^d$ for $d\geq 3$. But they doubted the existence of genuinely entangled set which is still strongly nonlocal. Therefore, it is interesting to consider whether genuinely entangled set presents this kind of strong nonlocality or not. 
	
	In this paper, we construct genuinely entangled sets with strong nonlocality in $\mathbb{C}^d \otimes\mathbb{C}^d  \otimes\mathbb{C}^d$ for $d\geq 3$. First, we give an example in $\mathbb{C}^3 \otimes\mathbb{C}^3  \otimes\mathbb{C}^3$. Then by relating some graphs to some special genuinely entangled set, we prove that the connectivities of these graphs are sufficient to show the strong nonlocality of this given set.  As a consequence, we construct a genuinely entangled set with $d^3-(d-2)^2$ (or $d^3-(d-2)^2$) elements in $\mathbb{C}^d \otimes\mathbb{C}^d \otimes \mathbb{C}^d$ when $d\geq 3$ is odd (or is even) that is of strong nonlocality.
	
	The rest of this article is organized as follows. In Sec. \ref{sec:second}, we give some necessary notation and definitions. In Sec. \ref{second}, we present a general method to construct strong nonlocal set which is consisting of GHZ like states.  In Sec. \ref{third}, we extend this construction to more general case and present an example.    Finally, we draw a conclusion and present some interesting problems in     section \ref{fifth}.

	\section{Preliminaries }\label{sec:second}
	For an integer $d\geq 2$, denote $\mathbb{Z}_d$ to be the group defined over $\{0,1,\cdots,d-1\}$ with $\mathrm{ mod } \ d $ as its ``$+$" operation.  Let $\cH$ be a Hilbert space of dimension $d$. We alway assume that its computational basis is $\{|i\rangle \mid i\in \mathbb{Z}_d\}.$
	
	In this section, we will   introduce the definition of the graph and some related usages, the concept of genuine entanglement and the definition of strongest nonlocality.
	
	\vskip 5pt
	
	\noindent\emph{Graph.}-- A graph (see Ref. \cite{Reinhard05} for more details)   is a pair $G = (V, E)$, where $V$ is a set whose elements are called vertices   and $E$ is a set of paired vertices, whose elements are called edges.  
	
	In a  graph, a  pair of vertices $\{x, y\}$ (where $x,y\in V$) is called connected if there is  a path  with edges from $x$ to $y.$ Otherwise, the   pair is called disconnected. A connected graph is a  graph in which every   pair of vertices in the graph is connected. There is an important concept called  connected component which is related to the connectivity of a graph. A connected component is a maximal connected subgraph of a  graph.  Then a graph is connected if and only if it has exactly one connected component.
	
	\vskip 5pt

	\noindent\emph{Genuine entanglement.}-- In a bipartite system $\cH_A\otimes \cH_B$, a pure $|\Psi\rangle_{AB}$ is called a entangled state if it can not write as some tensor product of two local pure states,  i.e., $|\Psi\rangle_{AB}$ is not of the form $|\phi\rangle_A\otimes|\theta\rangle_B$. For a pure state $|\Psi\rangle_{A_1\cdots A_n}$ in  multipartite systems $\otimes_{i=1}^n \cH_{A_i}$, it is called a genuinely entangled state  if it is entangled for each bipartition  of $\{A_1,\cdots,A_n\}$ (see Ref. \cite{Markiewicz13}). 
	
	The most well known genuinely entangled state is  the Greenberger-Horne-Zeilinger (GHZ) state $(|000\rangle+|111\rangle)/\sqrt{2}$ and the $W$ state $(|100\rangle+|010\rangle+|001\rangle)/\sqrt{3}$ in three qubits. For a higher dimensional systems   $\mathbb{C}^{d_1}\otimes \mathbb{C}^{d_2}\otimes  \mathbb{C}^{d_3}$, the states $(|i_1j_1k_1\rangle\pm|i_2j_2k_2\rangle)/\sqrt{2}$  are also genuinely entangled if $i_1\neq i_2, j_1\neq j_2,$ and $k_1\neq k_2.$ We call such states as GHZ like states under the computational basis. In addition,  the state $$(|i_1j_1k_1\rangle+w|i_2j_2k_2\rangle+w^2|i_3j_3k_3\rangle +\cdots+w^{d-1}|i_dj_dk_d\rangle)/\sqrt{d}$$ 
	(where $w$ is any $d$-th root of unit and $i_m\in \mathbb{Z}_{d_1},$  $j_m \in \mathbb{Z}_{d_2},$ $k_m\in \mathbb{Z}_{d_3}$) is also genuinely entangled if  $i_m \neq i_{m'}$  $j_m \neq j_{m'}$ and  $k_m\neq k_{m'}$ whenever   $m\neq m'$.  We call these states   GHZ like states with  weight $d$. And  the set  $\{|i_mj_mk_m\rangle\}_{m=1}^d$ with the  property,  $i_m \neq i_{m'}$  $j_m \neq j_{m'}$ and  $k_m\neq k_{m'}$ whenever   $m\neq m'$,  is  called coordinately different. One should note that the GHZ like states with weight $d$ in $\mathbb{C}^{d}\otimes \mathbb{C}^{d}\otimes  \mathbb{C}^{d}$ are 1-uniform states \cite{Goyeneche14} or absolute maximally entangled states \cite{Facchi08, Arnaud13,Huber18}.
	In this paper, we mainly pay attention to the above  form of genuinely entangled states.
	
	\vskip 5pt 
	
	\noindent\emph{Strongest nonlocality.}--  A measurement is nontrivial if not all the
	POVM elements are proportional to the identity operator.  Otherwise, the measurement is trivial.  An set  of orthogonal states is said to be of   the strongest nonlocality  if only trivial  orthogonality-preserving POVM can perform for each bipartition of the subsystems.
	
	Note that a set of the strongest nonlocality must also be of strong nonlocality by definition. In this paper, we mainly pay attention to the strongest nonlocality of some  given   orthogonal genuinely entangled set (OGES). 
	
	\section{Strongly nonlocal sets of     GHZ like states   }\label{second}
	
	In this section,  we show that  genuinely entangled set that   are of the strongest nonlocality do exist. Moreover, the strongest nonlocality of some special sets are rather simple as it is determined by the connectivity of some related graphs. First, we present an example to show that orthogonal set of genuinely entangled states with the strongest nonlocality do exist even in $\mathbb{C}^3\otimes \mathbb{C}^3\otimes \mathbb{C}^3.$
	
	\begin{figure}[h]
		\centering		\includegraphics[width=0.5\textwidth,height=0.43\textwidth]{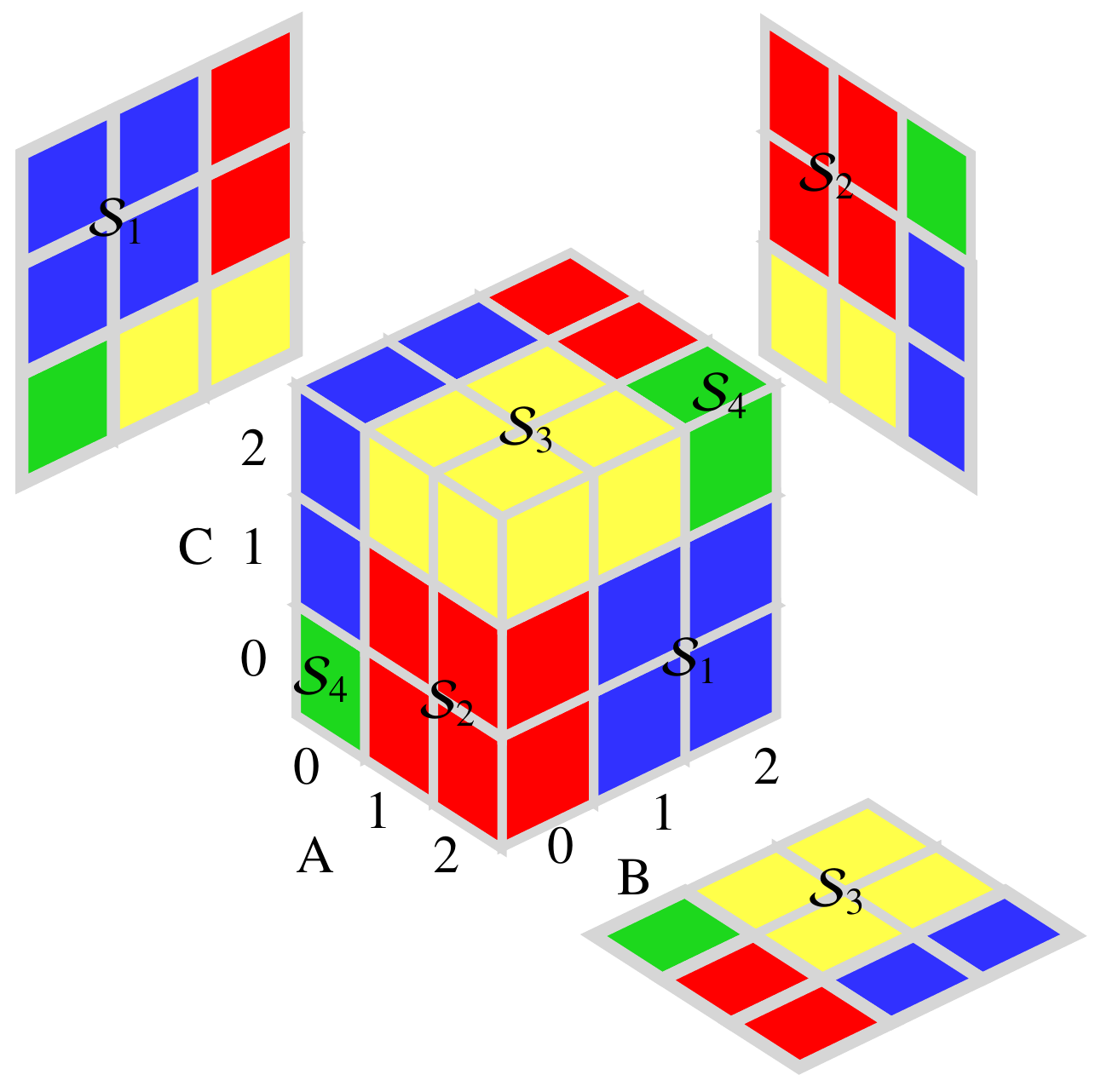}
		\caption{This shows the geometric structure of $\cS_i$ defined in Eq.~\eqref{OGES333}. }\label{fig:333cube}
	\end{figure}

	\begin{example}\label{OEGS_SN_333}
		In $\mathbb{C}^3\otimes \mathbb{C}^3\otimes \mathbb{C}^3$, the set
		$\cS:=\cup_{i=1}^{4}\cS_i$  given by Eq. \eqref{OGES333} is  an  \emph{OGES} of  the strongest nonlocality. The size of this set is $26$ (the geometric structure of $\cS$ can be  seen in Fig. \ref{fig:333cube}).
	\end{example}
	{\small	\begin{equation}\label{OGES333}
			\begin{aligned}
				\cS_1&:=\{\ket{0}_A\ket{i}_B\ket{j+1}_C \pm \ket{2}_A\ket{i+1}_B\ket{j}_C \mid (i,j)\in  \mathbb{Z}_2\times   \mathbb{Z}_2  \},\\
				\cS_2&:=\{\ket{i+1}_A\ket{0}_B\ket{j}_C \pm \ket{i}_A\ket{2}_B\ket{j+1}_C \mid (i,j)\in   \mathbb{Z}_2\times   \mathbb{Z}_2  \},\\	
				\cS_3&:=\{\ket{i}_A\ket{j+1}_B\ket{0}_C \pm \ket{i+1}_A\ket{j}_B\ket{2}_C \mid (i,j)\in   \mathbb{Z}_2\times   \mathbb{Z}_2  \},\\
				\cS_4&:=\{\ket{0}_A\ket{0}_B\ket{0}_C \pm \ket{2}_A\ket{2}_B\ket{2}_C    \}. \\		
			\end{aligned}
		\end{equation}
	}
	\begin{proof}
		Without loss of generality, let $B$ and $C$ come together to  perform a joint  orthogonality-preserving POVM $\{E=M^{\dagger}M\}$, where $E=(a_{ij,k\ell})_{i,j,k,\ell\in \mathbb{Z}_3}$. Then the postmeasurement states $\{\mathbb{I}_A\otimes M\ket{\psi}\big |\ket{\psi}\in \cS\}$ should be mutually orthogonal, i.e., 
		\begin{equation}\label{eq:OGES333relation}
			\langle \phi|\mathbb{I}_A\otimes E |\psi\rangle=0
		\end{equation}  
		for $|\psi\rangle, |\phi\rangle \in \cS$ and  $|\psi\rangle\neq |\phi\rangle.$
		
		First, we  show that the matrix $E$ is diagonal.    For any  pair of non-equal  coordinates  $(i,j)$ and $(k,l)$  in $ \mathbb{Z}_3\times   \mathbb{Z}_3$,  one finds that there exist  two pairs of genuinely entangled states $ \{|\psi_\pm\rangle:=|0\rangle_A|i\rangle_B|j\rangle_C\pm |a_1\rangle_A|b_1\rangle_B|c_1\rangle_C\}$  and  $\{|\phi_\pm\rangle:=|0\rangle_A|k\rangle_B|l\rangle_C\pm |a_2\rangle_A|b_2\rangle_B|c_2\rangle_C\}$ (maybe $|0\rangle_A|i\rangle_B|j\rangle_C$ or  $|0\rangle_A|k\rangle_B|l\rangle_C$ is in the  second term) in $\cS$. Replacing $|\psi\rangle$ by one of   $|\psi_\pm\rangle$ and $|\phi\rangle$ by one of   $|\phi_\pm\rangle$ in Eq.~\eqref{eq:OGES333relation}, we get four equations. Using these four  relations, one obtain that
		$$a_{ij,kl}= {}_A\langle 0|{}_B\langle i|{}_C\langle j| \mathbb{I}_A\otimes E   |0\rangle_A|k\rangle_B|l\rangle_C=0.$$
		Therefore, the matrix $E$ is diagonal under the computational basis.	
		
		In the following, we show that $E$ is indeed proportional to the identity operator. Note the following obeservation: appling   Eq.~\eqref{eq:OGES333relation} to any pair of  states in $\cS$ of the form	$|i_1\rangle_A|j_1\rangle_B|k_1\rangle_C\pm |i_{2}\rangle_A|j_{2}\rangle_B|k_{2}\rangle_C,$ as $i_1\neq i_2$, one could deduce that
		\begin{equation}\label{eq:OGES333Equal}
			a_{j_1k_1,j_1k_1}=a_{j_2k_2,j_2k_2}.
		\end{equation}
		
		Using this obeservation to the following pairs:
		{\small$$
			\begin{array}{ll}
				|1\rangle_A|0\rangle_B|0\rangle_C\pm |0\rangle_A|2\rangle_B|1\rangle_C, &	|0\rangle_A|1\rangle_B|2\rangle_C\pm |2\rangle_A|2\rangle_B|1\rangle_C,\\
				|0\rangle_A|2\rangle_B|0\rangle_C\pm |1\rangle_A|1\rangle_B|2\rangle_C, &	|0\rangle_A|1\rangle_B|1\rangle_C\pm |2\rangle_A|2\rangle_B|0\rangle_C,\\
				|0\rangle_A|0\rangle_B|2\rangle_C\pm |2\rangle_A|1\rangle_B|1\rangle_C,& 
				|1\rangle_A|1\rangle_B|0\rangle_C\pm |2\rangle_A|0\rangle_B|2\rangle_C, \\
				|0\rangle_A|0\rangle_B|1\rangle_C\pm |2\rangle_A|1\rangle_B|0\rangle_C,& 
				|1\rangle_A|0\rangle_B|1\rangle_C\pm |0\rangle_A|2\rangle_B|2\rangle_C, \\	
			\end{array}
			$$
		}
		we have the equations
		$$
		\begin{array}{ll}
			a_{00,00}=a_{21,21},	 &	 a_{21,21}=a_{12,12},\\
			a_{12,12}=a_{20,20}, & a_{20,20}=a_{11,11},	\\
			a_{11,11}=a_{02,02},& 
			a_{02,02}=a_{10,10}, \\
			a_{10,10}=a_{01,01},& 
			a_{01,01}=a_{22,22}.\\	
		\end{array}
		$$
		Therefore, one concludes  that	$E$ is proportional to the identity operator.	 
	\end{proof}
	
	\vskip 5pt
	
	To prove that the elements on the diagonal of $E$  are equal to each other, one can also attach the set $\cS$ with a graph $\cG_A(\cS)=(V_A(\cS),E_A(\cS))$ defined as follows: $V_A(\cS):=\mathbb{Z}_3\times \mathbb{Z}_3$, $\{(j_1,k_1),(j_2,k_2)\}\in E_A(\cS)$ if and only if there exist some $i_1,i_2\in \mathbb{Z}_3$ such that  the pair of states 
	$|i_1\rangle_A|j_1\rangle_B|k_1\rangle_C\pm |i_{2}\rangle_A|j_{2}\rangle_B|k_{2}\rangle_C $
	are in $\cS.$  Under this definition, by Eq.~\eqref{eq:OGES333Equal}, if $\cG_A(\cS)$ is connected, then the elements on the diagonal of  $E$ must be equal to each other. And the graph $\cG_A(\cS)$ can be seen in the Fig. \ref{fig:OEGS333}. One can easily check that it is connected. Therefore, one can also conclude that $E$ is proportional to identity operator via the connectivity of this graph.
	
	\vskip 5pt

	\begin{figure}[h]
		\centering
		\includegraphics[scale=0.32]{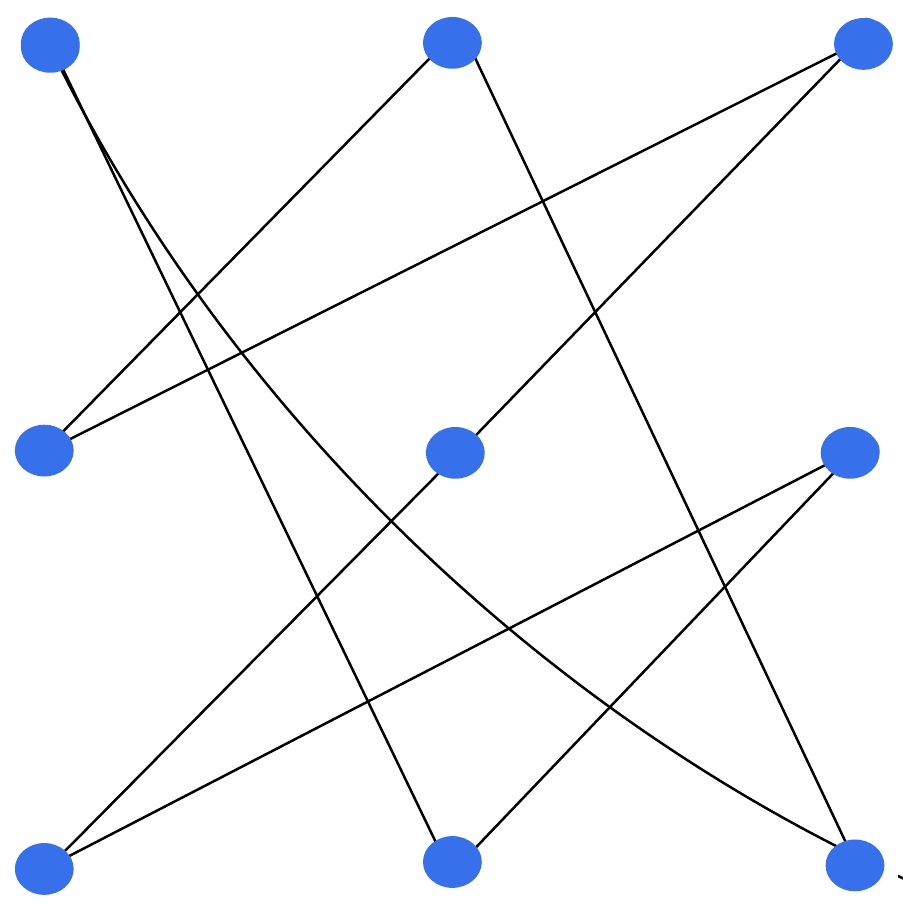}
		\caption{This shows the $\cG_A(\cS)$ corresponding  to the OGES  of  $\mathbb{C}^3\otimes \mathbb{C}^3\otimes \mathbb{C}^3$ defined by Eq.~\eqref{OGES333}. }\label{fig:OEGS333}
	\end{figure}

	To prove that the off diagonal elements of $E$ are all zero, we use the following two facts:
	\begin{enumerate}[(a)] 
		\item The set $\cS$ can be separated into pairs of GHZ like states of the form 
		$|i_1\rangle_A|j_1\rangle_B|k_1\rangle_C\pm |i_{2}\rangle_A|j_{2}\rangle_B|k_{2}\rangle_C.$
		\item The  set $\cC(\cS)$ defined by 
		$\{(i_1,j_1,k_1),(i_2,j_2,k_2)\mid |i_1\rangle_A|j_1\rangle_B|k_1\rangle_C\pm |i_{2}\rangle_A|j_{2}\rangle_B|k_{2}\rangle_C\in \cS \}$
		contains the subsets $\{0\}\times \mathbb{Z}_{3}\times\mathbb{Z}_{3},$   $\mathbb{Z}_{3}\times \{0\}\times\mathbb{Z}_{3}$  and  $\mathbb{Z}_{3}\times \mathbb{Z}_{3}\times\{0\}$ of  $\mathbb{Z}_{3}\times \mathbb{Z}_{3}\times\mathbb{Z}_{3}.$
		
	\end{enumerate}
	
	In order to move to the general tripartite systems, we make two assumptions on the given set   $\cS\subseteq\mathbb{C}^{d_1}\otimes \mathbb{C}^{d_2}\otimes \mathbb{C}^{d_3}$ to be distinguished. The first one  is that the set $\cS$ can be separated into pairs of GHZ like states of the form 
	$|i_1\rangle_A|j_1\rangle_B|k_1\rangle_C\pm |i_{2}\rangle_A|j_{2}\rangle_B|k_{2}\rangle_C.$
	The second one is that the  set $\cC(\cS)$ defined by 
	$\{(i_1,j_1,k_1),(i_2,j_2,k_2)\mid |i_1\rangle_A|j_1\rangle_B|k_1\rangle_C\pm |i_{2}\rangle_A|j_{2}\rangle_B|k_{2}\rangle_C\in \cS \}$
	contains the subsets $\{i_0\}\times \mathbb{Z}_{d_2}\times\mathbb{Z}_{d_3},$   $\mathbb{Z}_{d_1}\times \{j_0\}\times\mathbb{Z}_{d_3}$  and  $\mathbb{Z}_{d_1}\times \mathbb{Z}_{d_2}\times\{k_0\}$ for some $i_0\in \mathbb{Z}_{d_1}, j_0\in \mathbb{Z}_{d_2}, k_0\in \mathbb{Z}_{d_3}.$ If the first condition is satisfied, the  set $\cS$ is called a special set of GHZ like states under the computational basis  And we call the set  $\cS$  is plane containing if it satisfies the second condition.

	For each partition $A|BC,B|CA,$ and $C|AB$, we attach the set $\cS$ with a graph $\cG_{A}(\cS)$, $\cG_{B}(\cS)$ and $\cG_{C}(\cS)$ respectively. Here we give the exact description of $\cG_B(\cS)$ as example, $\cG_{A}(\cS)$ and $\cG_{C}(\cS)$ can be defined similarly.  The graph  $\cG_B(\cS)$ is determined by its vertexes $V_B(\cS)$  and  edges $E_B(\cS).$ Here $V_B(\cS)$ is defined to be $\mathbb{Z}_{d_3}\times \mathbb{Z}_{d_1}.$ A pair of nodes $\{(k_1,i_1),(k_2,i_2)\}$ belongs to $E_B(\cS)$ if and only if there exist some $j_1,j_2\in \mathbb{Z}_{d_2}$ such that  the pair of states 
	$|i_1\rangle_A|j_1\rangle_B|k_1\rangle_C\pm |i_{2}\rangle_A|j_{2}\rangle_B|k_{2}\rangle_C $
	are in $\cS.$ We call the graph  $\cG_B(\cS)$  the corresponding graph  of $\cS$ with respect to the partition $B|CA$.

	\begin{theorem}\label{GHZ_SN_tri}
		Let $\cS$ be a special orthogonal  set of GHZ like states  under computational basis in $\mathbb{C}^{d_1}\otimes\mathbb{C}^{d_2}\otimes \mathbb{C}^{d_3}$ and suppose that it is plane containing. Then the set  $\cS$ is of the strongest nonlocality if and only if all  the    corresponding graphs of  $\cS$    are connected.
	\end{theorem}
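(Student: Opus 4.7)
The plan is to prove both directions by emulating and extending the $3\otimes 3\otimes 3$ argument of Example~\ref{OEGS_SN_333}. Fix a bipartition, say $A|BC$, let $B$ and $C$ jointly implement an orthogonality-preserving POVM element $E=M^{\dagger}M$ with matrix entries $(a_{jk,j'k'})$ in the computational basis of $\cH_B\otimes\cH_C$, and exploit the relations $\langle\phi|\mathbb{I}_A\otimes E|\psi\rangle=0$ for all distinct $|\phi\rangle,|\psi\rangle\in\cS$. Because the hypotheses (special set of GHZ like states, plane containing) are symmetric across the three parties, the arguments for the bipartitions $B|CA$ and $C|AB$ are obtained by relabelling.

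For the sufficiency direction, I would first show that $E$ is diagonal. Given any distinct coordinates $(j_1,k_1)\neq(j_2,k_2)$ in $\mathbb{Z}_{d_2}\times\mathbb{Z}_{d_3}$, the plane containing hypothesis supplies some $i_0$ with $(i_0,j_1,k_1),(i_0,j_2,k_2)\in\cC(\cS)$, so $\cS$ contains two GHZ like $\pm$-pairs $|\phi_\pm\rangle=|i_0\rangle|j_1\rangle|k_1\rangle\pm|a_1\rangle|b_1\rangle|c_1\rangle$ and $|\psi_\pm\rangle=|i_0\rangle|j_2\rangle|k_2\rangle\pm|a_2\rangle|b_2\rangle|c_2\rangle$ with $a_1,a_2\neq i_0$. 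Writing out the four relations $\langle\phi_\epsilon|\mathbb{I}_A\otimes E|\psi_{\epsilon'}\rangle=0$ and using $\langle i_0|a_1\rangle=\langle i_0|a_2\rangle=0$, all cross terms carrying $\langle i_0|a_t\rangle$ drop out. One is left either with $a_{j_1k_1,j_2k_2}=0$ immediately (when $a_1\neq a_2$) or with the coupled system $a_{j_1k_1,j_2k_2}\pm a_{b_1c_1,b_2c_2}=0$ (when $a_1=a_2$), whose sum again yields $a_{j_1k_1,j_2k_2}=0$. Next, each GHZ like pair $|i_1\rangle|j_1\rangle|k_1\rangle\pm|i_2\rangle|j_2\rangle|k_2\rangle\in\cS$ forces $a_{j_1k_1,j_1k_1}=a_{j_2k_2,j_2k_2}$ by exactly the argument of Eq.~\eqref{eq:OGES333Equal}. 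Hence every edge of $\cG_A(\cS)$ equates the corresponding diagonal entries, and connectivity of $\cG_A(\cS)$ propagates the equality across all of $V_A(\cS)=\mathbb{Z}_{d_2}\times\mathbb{Z}_{d_3}$, giving $E\propto\mathbb{I}_{BC}$.

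For the converse, suppose $\cG_A(\cS)$ has connected components $C_1,\ldots,C_r$ with $r\geq 2$. I would exhibit a nontrivial orthogonality-preserving POVM on $\cH_B\otimes\cH_C$ by setting $E=\sum_{s=1}^{r}\lambda_s\sum_{(j,k)\in C_s}|jk\rangle\!\langle jk|$ for pairwise distinct $\lambda_s\in(0,1)$ and taking the complementary element to be $\mathbb{I}_{BC}-E$. Because $E$ has at least two distinct eigenvalues, the measurement is nontrivial. To check orthogonality preservation, note that since $E$ is diagonal, $\langle\phi|\mathbb{I}_A\otimes E|\psi\rangle$ collects only those terms where the $A$-indices of $|\phi\rangle$ and $|\psi\rangle$ agree on a basis vector common to both. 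For states from two different partner pairs this common-support set is empty (else orthogonality of $\cS$ would fail on that pair), while for the two states of a single partner pair the endpoints $(j_1,k_1),(j_2,k_2)$ form an edge of $\cG_A(\cS)$, hence lie in the same component and carry identical $\lambda_s$, so the $\pm$ contributions cancel.

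The principal technical obstacle is the off-diagonal vanishing, because one must keep track of four coupled orthogonality equations stemming from the two $\pm$-pairs sharing the plane $\{i_0\}\times\mathbb{Z}_{d_2}\times\mathbb{Z}_{d_3}$, and in particular must resolve the degenerate case $a_1=a_2$ where the four equations do not decouple on inspection but still force the desired vanishing after addition. The remaining content is the elementary graph-theoretic equivalence that a function constant along every edge is constant on each connected component, together with the explicit component-wise construction used in the converse direction.
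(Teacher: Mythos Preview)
Your proposal is correct and follows essentially the same approach as the paper: the sufficiency argument (plane-containing forces the off-diagonal entries of $E$ to vanish, then each GHZ pair equates two diagonal entries, and connectivity propagates this globally) is identical, with your treatment of the degenerate case $a_1=a_2$ being slightly more explicit than the paper's. For the converse, the paper argues directly that the only constraints on the diagonal of $E$ are the edge-equalities $a_{j_1k_1,j_1k_1}=a_{j_2k_2,j_2k_2}$, so strongest nonlocality forces connectivity, whereas you run the contrapositive and exhibit an explicit diagonal $E$ constant on components; these are the same observation packaged dually.
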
	
	\begin{proof}($\Leftarrow$) 
		Without loss of generality, suppose that  $B$ and $C$ come together and   perform a joint  orthogonality-preserving POVM $\{E=M^{\dagger}M\}$, where $E=(a_{ij,k\ell})_{i,k\in \mathbb{Z}_{d_2},j,\ell\in \mathbb{Z}_{d_3}}$. The postmeasurement states $\{\mathbb{I}_A\otimes M\ket{\psi}\big |\ket{\psi}\in \cS\}$ should be mutually orthogonal,  i.e., 
		\begin{equation}\label{eq:OGESgenrelation}
			\langle \phi|\mathbb{I}_A\otimes E |\psi\rangle=0
		\end{equation}  
		for $|\psi\rangle, |\phi\rangle \in \cS$ and  $|\psi\rangle\neq |\phi\rangle.$  First, we  show that the matrix $E$ is diagonal.    For any  pair of non-equal  coordinates  $(i,j)$ and $(k,l)$  in $\mathbb{Z}_{d_2}\times \mathbb{Z}_{d_3}$,  as  $\cS$ is  plane containing,   there exist  two pairs of GHZ like states $ \{|\psi_\pm\rangle:=|i_0\rangle_A|i\rangle_B|j\rangle_C\pm |a_1\rangle_A|b_1\rangle_B|c_1\rangle_C\}$  and  $\{|\phi_\pm\rangle:=|i_0\rangle_A|k\rangle_B|l\rangle_C\pm |a_2\rangle_A|b_2\rangle_B|c_2\rangle_C\}$ (maybe $|i_0\rangle_A|i\rangle_B|j\rangle_C$ or $|i_0\rangle_A|k\rangle_B|l\rangle_C$ is in the second term) in $\cS$. Replacing $|\psi\rangle$ by one of   $|\psi_\pm\rangle$ and $|\phi\rangle$ by one of   $|\phi_\pm\rangle$ in Eq.~\eqref{eq:OGES333relation}, we get four equations. Using these four  relations, one obtain 
		$$a_{ij,kl}= {}_A\langle i_0|{}_B\langle i|{}_C\langle j| \mathbb{I}_A\otimes E   |i_0\rangle_A|k\rangle_B|l\rangle_C=0.$$
		Therefore, the matrix $E$ must be a diagonal matrix  under the computational basis.
		
		In the following, we show that the elements on the diagonal of $E$ are all equal to each other. As the graph $\cG_A(\cS)$ is connected, for any two different nodes  $(j_I,k_I)$ and $(j_L,k_L)$ of $\cG_A(\cS)$, there is a path connecting them, namely	
		\begin{equation}\label{eq:path}
			(j_1,k_1)\overset{e_1}{-} (j_2,k_2)\overset{e_2}{-}\cdots \overset{e_{N-1}}{-} (j_{N},k_{N})\overset{e_N}{-}(j_{N+1},k_{N+1})
		\end{equation}  
		here  $(j_1,k_1)=(j_I,k_I)$ and $(j_{N+1},k_{N+1})=(j_L,k_L)$. By the definition of the edges of $\cG_A(\cS)$, for each edge $e_l=\{(j_l,k_l),(j_{l+1},k_{l+1})\},$ there is a pair of     states of the form $|i_l\rangle_A|j_l\rangle_B|k_l\rangle_C\pm |i_{l+1}\rangle_A|j_{l+1}\rangle_B|k_{l+1}\rangle_C\in\cS$ for some different $i_l,i_{l+1} \in \mathbb{Z}_{d_1}.$	
		Appling Eq.~\eqref{eq:OGESgenrelation}  to each pair of genuinely entangled states $|i_l\rangle_A|j_l\rangle_B|k_l\rangle_C\pm |i_{l+1}\rangle_A|j_{l+1}\rangle_B|k_{l+1}\rangle_C,$	 one can 
		obtain that 
		\begin{equation}\label{eq:generaleq} a_{j_{l}k_l,j_lk_l}=a_{j_{l+1}k_{l+1},j_{l+1}k_{l+1}},\ \ 1\leq l\leq N.
		\end{equation}	
		From the path in Eq. \eqref{eq:path} and Eq.~\eqref{eq:generaleq}, one could easily deduce that 
		$$ a_{j_{I}k_I,j_Ik_I}=a_{j_{L}k_{L},j_{L}k_{L}}.$$
		
		Therefore, one can conclude that the matrix $E$ is indeed proportional to identity operator. 
		
		\vskip 5pt
		
		($\Rightarrow$)	 Without loss of generality, we only need to show that $\cG_A(\cS)$ is connected. By the strongest nonlocality of $\cS$, there is only one solution (which is proportional to the identity operator) for $E$ when 	it satisfies all the equations in Eq.~\eqref{eq:OGESgenrelation}. By appling Eq.~\eqref{eq:OGESgenrelation} to two states $|\phi\rangle $ and $|\psi\rangle$ from different   pair of GHZ like states, we can only obtain some linear equations for the off diagonal elements of $E.$ Therefore, only the following equations  are related to  the diagonal elements of $E$
		$$\{	\langle \phi_+|\mathbb{I}_A\otimes E |\phi_-\rangle=0 \mid  |\phi_\pm\rangle \text{  a pair of GHZ like states in } \cS \}.$$
		For each pair of GHZ like states $|\phi_\pm\rangle=|i_1\rangle_A|j_1\rangle_B|k_1\rangle_C\pm |i_{2}\rangle_A|j_{2}\rangle_B|k_{2}\rangle_C$ in $\cS,$ the corresponding equation in the above set is just
		$ a_{j_1k_1,j_1k_1} =a_{j_2k_2,j_2k_2}.$ This equation exactly corresponds to the edge  connecting $(j_1,k_1)$ and $(j_2,k_2)$ in $\cG_A(S)$. From this correspondence, one can deduce that $\cG_A(S)$ is connected as the diagonal elements of $E$ are equal to each other.
	\end{proof}

	\vskip 5pt 
	From the above theorem, the problem of the  strongest nonlocality of some special orthogonal set with GHZ like states can be determined by the connectivity of its related graphs.	Moreover, one can easily extend the above theorem to multipartite systems. Now we apply Theorem \ref{GHZ_SN_tri} to get some sets of the strong nonlocality with   genuine entanglement.

	\begin{figure}[h]
		\centering
		\includegraphics[scale=0.84]{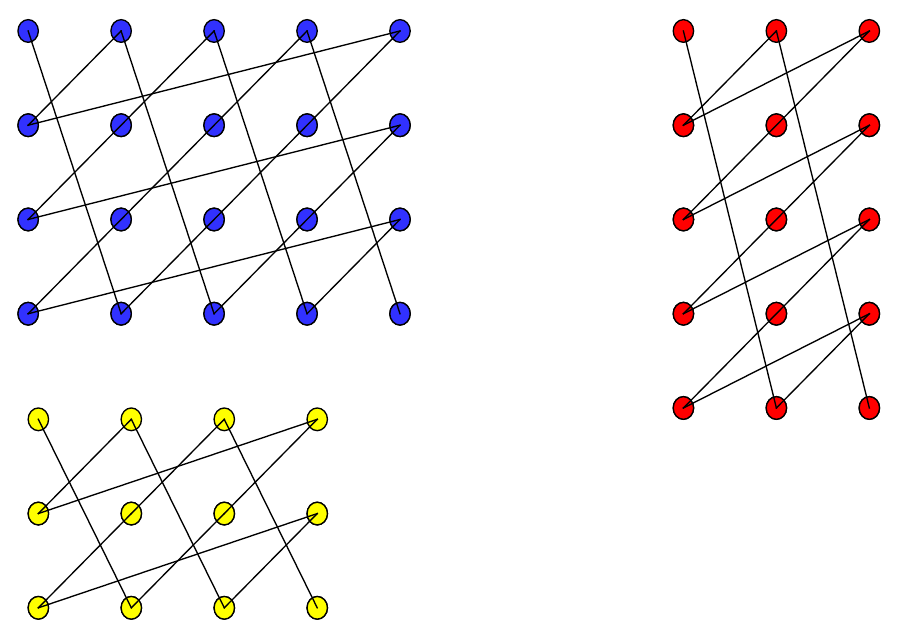}  	\caption{This shows the three graphs ($\cG_A(\cS)$ without the edge $\{(0,0),(3,4)\}$: blue, $\cG_B(\cS)$ without the edge $\{(0,0),(4,2)\}$: red,  $\cG_C(\cS)$ without the edge $\{(0,0),(2,3)\}$: yellow) corresponding  to the OGES of $\mathbb{C}^3\otimes \mathbb{C}^4\otimes \mathbb{C}^5$ defined by  Eq. \eqref{GES345}.    }\label{fig:345cube}
	\end{figure}
	
	\begin{example}\label{OPS_SN_333}
		In $\mathbb{C}^3\otimes \mathbb{C}^4\otimes \mathbb{C}^5$, the set
		$\cS:=\cup_{i=1}^{4}\cS_i$    is  an  \emph{OGES} of  the strongest nonlocality, where
		{\small	\begin{equation}\label{GES345}
				\begin{aligned}
					\cS_1&:=\{\ket{0}_A\ket{i}_B\ket{j+1}_C \pm \ket{2}_A\ket{i+1}_B\ket{j}_C \mid (i,j)\in  \mathbb{Z}_3\times  \mathbb{Z}_4  \},\\
					\cS_2&:=\{\ket{i+1}_A\ket{0}_B\ket{j}_C \pm \ket{i}_A\ket{3}_B\ket{j+1}_C \mid (i,j)\in  \mathbb{Z}_2\times  \mathbb{Z}_4  \},\\	
					\cS_3&:=\{\ket{i}_A\ket{j+1}_B\ket{0}_C \pm \ket{i+1}_A\ket{j}_B\ket{4}_C \mid (i,j)\in  \mathbb{Z}_2\times  \mathbb{Z}_3  \},\\
					\cS_4&:=\{\ket{0}_A\ket{0}_B\ket{0}_C \pm \ket{2}_A\ket{3}_B\ket{4}_C    \}. 		
				\end{aligned}
			\end{equation}
		}
	\end{example}

	In fact,  $\cS$ is a special set of GHZ like states that is plane containing. Moreover, its corresponding graphs $\cG_A(\cS), \cG_B(\cS)$ and $\cG_C(\cS)$ are showed in the   Fig. \ref{fig:345cube} and they are all connected. By Theorem \ref{GHZ_SN_tri}, the set $\cS$ is of the strongest nonlocality.

	\begin{proposition}\label{pro:GES_SN_tri_o}
		In  $\mathbb{C}^d\otimes \mathbb{C}^d\otimes \mathbb{C}^d$, $d\geq 3$ and odd, the set $\cup_{i=1}^4 \cS_i$ given by Eq. (\ref{OGESdddo})  is an  OGES of  the strongest nonlocality. The size of this set is $d^3-(d-2)^3$. Here and the following we use the notation ${\hat{d}} :=d-1$ for simplicity.
	\end{proposition}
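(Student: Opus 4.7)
The plan is to apply Theorem~\ref{GHZ_SN_tri}, which reduces the strongest-nonlocality claim to three combinatorial facts about $\cS=\bigcup_{i=1}^4\cS_i$: that it is a special set of GHZ-like states, that it is plane-containing, and that its three corresponding graphs $\cG_A(\cS),\cG_B(\cS),\cG_C(\cS)$ are all connected. The cardinality $d^3-(d{-}2)^3$ will drop out of the same bookkeeping.

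First, every $\pm$-pair in Eq.~\eqref{OGESdddo} combines two basis vectors whose three coordinates all differ (using $0\neq\hat{d}$ together with the shifts $i\mapsto i{+}1$, $j\mapsto j{+}1$), so $\cS$ is a special set of GHZ-like states. A direct count gives $|\cS_1|=|\cS_2|=|\cS_3|=2(d{-}1)^2$ and $|\cS_4|=2$, so $|\cS|=6(d{-}1)^2+2=d^3-(d{-}2)^3$. Going family by family, one checks that the $6(d{-}1)^2+2$ triples appearing in $\cC(\cS)$ are pairwise distinct and form exactly the ``shell'' of $\mathbb{Z}_d^3$, i.e.\ the triples with at least one coordinate in $\{0,\hat{d}\}$; this shell contains all three coordinate planes $\{0\}\times\mathbb{Z}_d^2$, $\mathbb{Z}_d\times\{0\}\times\mathbb{Z}_d$, $\mathbb{Z}_d^2\times\{0\}$, giving orthogonality (hence the OGES property) and plane-containment with $i_0=j_0=k_0=0$.

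For the graph-connectivity step the cyclic $A\!\to\! B\!\to\! C$ symmetry of the four families means it suffices to handle $\cG_A(\cS)$. Its vertex set is $\mathbb{Z}_d\times\mathbb{Z}_d$ and its edges are: anti-diagonal edges $\{(i,j{+}1),(i{+}1,j)\}$ from $\cS_1$, ``left-right wrap'' edges $\{(0,j),(\hat{d},j{+}1)\}$ from $\cS_2$, ``bottom-top wrap'' edges $\{(j{+}1,0),(j,\hat{d})\}$ from $\cS_3$, and the corner edge $\{(0,0),(\hat{d},\hat{d})\}$ from $\cS_4$. Partitioning the vertices by anti-diagonals $D_s:=\{(b,c)\mid b+c=s\}$ for $s\in\{0,\ldots,2\hat{d}\}$, the $\cS_1$-edges form a Hamiltonian path within each $D_s$ for $1\le s\le 2\hat{d}{-}1$, while $D_0$ and $D_{2\hat{d}}$ are singletons. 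Connectivity of $\cG_A(\cS)$ therefore reduces to connectivity of the quotient graph $Q$ on $\{0,\ldots,2\hat{d}\}$ whose edges are $\{s,s{+}d\}$ for $s\in\{0,\ldots,\hat{d}{-}1\}$ (from $\cS_2$), $\{s,s{+}d{-}2\}$ for $s\in\{1,\ldots,\hat{d}\}$ (from $\cS_3$), and $\{0,2\hat{d}\}$ (from $\cS_4$).

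The hard part is proving $Q$ is connected, and this is where the odd-$d$ hypothesis is used. Because $d$ and $d{-}2$ are both odd, every $\cS_2$- and $\cS_3$-edge of $Q$ flips parity, so the subgraph obtained by deleting the $\cS_4$-edge is bipartite between even and odd indices and splits into (at least) two components. The zig-zag $0\to d\to 2\to d{+}2\to\cdots$ (alternating $\cS_2$- and $\cS_3$-steps for a net $\pm 2$ shift) traces out the component of $0$, namely $\{0,2,\ldots,\hat{d}\}\cup\{d,d{+}2,\ldots,2\hat{d}{-}1\}$, while the symmetric zig-zag from $2\hat{d}$ traces out the complementary component $\{2\hat{d},2\hat{d}{-}2,\ldots,d{+}1\}\cup\{\hat{d}{-}1,\hat{d}{-}3,\ldots,1\}$. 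The single $\cS_4$-edge $\{0,2\hat{d}\}$ glues these two classes into one connected $Q$, so $\cG_A(\cS)$ --- and by symmetry $\cG_B(\cS),\cG_C(\cS)$ --- is connected. Theorem~\ref{GHZ_SN_tri} then delivers the strongest nonlocality. (For even $d$ all three edge types of $Q$ preserve parity, so the $\cS_4$-edge no longer bridges the parity classes; this is presumably why the paper treats the two parities with separate propositions.)
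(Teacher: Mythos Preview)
Your overall strategy matches the paper's: invoke Theorem~\ref{GHZ_SN_tri}, verify plane-containment, use the cyclic $A\to B\to C$ symmetry of the four families to reduce to connectivity of $\cG_A(\cS)$, and pass to a quotient on the anti-diagonals. Your identification of the edges of $\cG_A(\cS)$ and the reduction to the quotient graph $Q$ on $\{0,\ldots,2\hat d\}$ are both correct.

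The connectivity analysis of $Q$, however, contains a genuine error. The inference ``bipartite $\Rightarrow$ at least two components'' is a non sequitur, and your claimed component decomposition of $Q$ minus the $\cS_4$-edge is simply false. Take $d=5$: the $\cS_2$-edges of $Q$ are $\{0,5\},\{1,6\},\{2,7\},\{3,8\}$ and the $\cS_3$-edges are $\{1,4\},\{2,5\},\{3,6\},\{4,7\}$, so the single path $0\!-\!5\!-\!2\!-\!7\!-\!4\!-\!1\!-\!6\!-\!3\!-\!8$ already spans all of $Q$ without using $\cS_4$. Your zig-zag $0\to5\to2\to7\to4$ is \emph{not} the whole component of $0$; you never checked that no edge leaves $\{0,2,4,5,7\}$, and the $\cS_3$-edge $\{1,4\}$ does. (The easy repair is to note that your two zig-zags together with the $\cS_4$-edge form a connected spanning subgraph of $Q$; the false claim that they are distinct components is unnecessary.)

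The paper's route is to quotient one step further. Since the $\cS_2$-edges shift the anti-diagonal sum by exactly $d$, the families $\cS_1$ and $\cS_2$ together already connect all $(b,c)$ with a fixed value of $b+c\pmod d$, collapsing the problem to $\mathbb{Z}_d$. On $\mathbb{Z}_d$ each $\cS_3$-edge shifts the residue by $\pm 2$, and since $d$ is odd, $\gcd(2,d)=1$ gives a single orbit. In particular $\cS_4$ plays no role in the connectivity argument (it is needed only for plane-containment, to supply the corner $(0,0,0)$); the genuine even/odd dichotomy is whether step~$2$ acts transitively on $\mathbb{Z}_d$, not the vertex-parity phenomenon in $Q$ that you sketch.
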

	{\small	\begin{equation}\label{OGESdddo}
			\begin{aligned}
				\cS_1&:=\{\ket{0}_A\ket{i}_B\ket{j+1}_C \pm \ket{\hat{d}}_A\ket{i+1}_B\ket{j}_C \mid (i,j)\in  \mathbb{Z}_{\hat{d}} \times  \mathbb{Z}_{\hat{d}}  \},\\
				\cS_2&:=\{\ket{i+1}_A\ket{0}_B\ket{j}_C \pm \ket{i}_A\ket{\hat{d}}_B\ket{j+1}_C \mid (i,j)\in  \mathbb{Z}_{\hat{d}} \times  \mathbb{Z}_{\hat{d}}   \},\\	
				\cS_3&:=\{\ket{i}_A\ket{j+1}_B\ket{0}_C \pm \ket{i+1}_A\ket{j}_B\ket{\hat{d}}_C \mid (i,j)\in  \mathbb{Z}_{\hat{d}} \times  \mathbb{Z}_{\hat{d}}   \},\\
				\cS_4&:=\{\ket{0}_A\ket{0}_B\ket{0}_C \pm \ket{\hat{d}}_A\ket{\hat{d}}_B\ket{\hat{d}}_C    \}. \\		
			\end{aligned}
		\end{equation}
	}
	The  proof of Proposition  \ref{pro:GES_SN_tri_o}   is  given  in  Appendix.

	\begin{proposition}\label{pro:GES_SN_tri_e}
		In  $\mathbb{C}^d\otimes \mathbb{C}^d\otimes \mathbb{C}^d$, $d\geq 4$ and even, the set $\cup_{i=1}^5 \cS_i$ given by Eq. (\ref{OGESddde})  is an  OGES of  the strongest nonlocality. The size of this set is $d^3-(d-2)^3+2$.
	\end{proposition}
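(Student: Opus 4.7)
The plan is to apply Theorem \ref{GHZ_SN_tri} to $\cS:=\cup_{i=1}^{5}\cS_i$, exactly as in the proof of Proposition \ref{pro:GES_SN_tri_o}. One must verify that (a) $\cS$ is a special orthogonal set of GHZ like states under the computational basis, (b) $\cS$ is plane containing, and (c) each of the three corresponding graphs $\cG_A(\cS)$, $\cG_B(\cS)$, $\cG_C(\cS)$ is connected. Items (a) and (b) are read off directly from Eq.~\ref{OGESddde}: every element of $\cS$ is a $\pm$-superposition of two coordinatewise-distinct computational basis vectors (hence a weight-two GHZ like state and genuinely entangled), distinct pairs have disjoint supports, and $\cS_1, \cS_2, \cS_3$ cause $\cC(\cS)$ to contain the three coordinate planes $\{0\}\times\mathbb{Z}_d\times\mathbb{Z}_d$, $\mathbb{Z}_d\times\{0\}\times\mathbb{Z}_d$, and $\mathbb{Z}_d\times\mathbb{Z}_d\times\{0\}$, just as in the odd case.

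The substance lies in step (c). By the cyclic symmetry of the construction under permutations of $A, B, C$, it suffices to prove that $\cG_A(\cS)$ is connected; $\cG_B(\cS)$ and $\cG_C(\cS)$ are handled analogously. The main obstacle, and the reason the even case forces one to introduce a fifth set $\cS_5$, is a parity obstruction that is absent for odd $d$. A straightforward calculation shows that every edge of $\cG_A(\cS)$ contributed by $\cup_{i=1}^{4}\cS_i$ (which mirror the odd construction) preserves the parity of the vertex-coordinate sum $j+k$: the interior staircase edges from $\cS_1$ keep the sum $i+j+1$ itself constant, while the boundary-wrap edges from $\cS_2, \cS_3$ shift coordinates by $\hat d$ and by $1$, yielding a total parity change equal to that of $d$ (zero for even $d$), and the antipodal edge from $\cS_4$ shifts both coordinates by $\hat d$, giving an even total. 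Hence, using only $\cup_{i=1}^{4}\cS_i$, the graph splits into a parity-even and a parity-odd component.

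The role of $\cS_5$ is precisely to furnish the missing parity-crossing edges. A single weight-two pair $|i_1,j_1,k_1\rangle\pm|i_2,j_2,k_2\rangle$ contributes edges whose parity changes in $\cG_A, \cG_B, \cG_C$ are $\beta+\gamma,\ \gamma+\alpha,\ \alpha+\beta \pmod 2$, where $\alpha=(i_1-i_2)\bmod 2$ and similarly for $\beta,\gamma$. Since $(\beta+\gamma)+(\gamma+\alpha)+(\alpha+\beta)\equiv 0 \pmod 2$, these three quantities cannot all be odd, so no single pair can bridge the parity classes in all three graphs at once. This is consistent with the cardinality bookkeeping $|\cS_4|+|\cS_5|=4$: $\cS_5$ (together with a suitably adjusted $\cS_4$) must supply at least two bridge pairs, chosen so that each of the three graphs acquires at least one parity-crossing edge. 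With these bridges in place one produces explicit paths from any given vertex of $\cG_A(\cS)$ to the reference vertex $(0,0)$, finishing the connectivity check. The bulk of the work — and the step I expect to be most delicate — is precisely this path construction in each of the three graphs, since one must use only the bridge edges from $\cS_5$ to cross parity and verify that the two parity classes remain connected internally after any rearrangement of $\cS_4$.

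Once (a), (b), (c) are verified, Theorem \ref{GHZ_SN_tri} gives the strongest nonlocality. Summing cardinalities, $|\cS_1|+|\cS_2|+|\cS_3|=6(d-1)^2$ together with $|\cS_4|+|\cS_5|=4$ yields $6d^2-12d+10 = d^3-(d-2)^3+2$, as claimed.
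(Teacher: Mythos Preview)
Your overall strategy matches the paper's: apply Theorem~\ref{GHZ_SN_tri}, read off (a) and (b) from Eq.~\eqref{OGESddde}, and for (c) first use $\cS_1,\cS_2,\cS_3$ to reduce each graph to at most two parity components, then bridge them. Your parity bookkeeping---in particular the observation that for a single GHZ pair the three parity changes $\beta+\gamma,\ \gamma+\alpha,\ \alpha+\beta$ sum to $0\pmod 2$, so one pair can never bridge all three graphs---is correct and a nice way to explain why two extra pairs are needed.

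There is, however, a concrete error that would derail the argument as written. You invoke ``cyclic symmetry of the construction under permutations of $A,B,C$'' to reduce to $\cG_A(\cS)$ alone, but the even-case construction is \emph{not} cyclically symmetric: in Eq.~\eqref{OGESddde} one has $\cS_4=\{|000\rangle\pm|232\rangle\}$ and $\cS_5=\{|\hat d\hat d\hat d\rangle\pm|233\rangle\}$, neither of which is invariant under a cyclic shift of the factors. Relatedly, you describe $\cS_4$ as ``the antipodal edge'' that ``shifts both coordinates by $\hat d$'', i.e.\ as $\{|000\rangle\pm|\hat d\hat d\hat d\rangle\}$ from the odd case; but in Eq.~\eqref{OGESddde} the set $\cS_4$ has already been replaced. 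Applying your own formula with $(\alpha,\beta,\gamma)=(0,1,0)$ for $\cS_4$ and (since $d$ even makes $\hat d$ odd) $(\alpha,\beta,\gamma)=(1,0,0)$ for $\cS_5$, one finds that $\cS_4$ crosses parity in $\cG_A$ and $\cG_C$ but \emph{not} in $\cG_B$, while $\cS_5$ crosses parity in $\cG_B$ and $\cG_C$ but \emph{not} in $\cG_A$. This is exactly why the paper's appendix handles the three graphs case by case: after the uniform argument from $\cS_1,\cS_2,\cS_3$ leaves at most two parity components in each graph, it invokes $\cS_4$ for $\cG_A(\cS)$ and $\cG_C(\cS)$ and $\cS_5$ for $\cG_B(\cS)$. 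Once you drop the symmetry claim and use the actual $\cS_4,\cS_5$ from Eq.~\eqref{OGESddde}, your plan goes through and coincides with the paper's proof.
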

	{\small	\begin{equation}\label{OGESddde}
			\begin{aligned}
				\cS_1&:=\{\ket{0}_A\ket{i}_B\ket{j+1}_C \pm \ket{\hat{d}}_A\ket{i+1}_B\ket{j}_C \mid (i,j)\in  \mathbb{Z}_{\hat{d}} \times  \mathbb{Z}_{\hat{d}}  \},\\
\cS_2&:=\{\ket{i+1}_A\ket{0}_B\ket{j}_C \pm \ket{i}_A\ket{\hat{d}}_B\ket{j+1}_C \mid (i,j)\in  \mathbb{Z}_{\hat{d}} \times  \mathbb{Z}_{\hat{d}}   \},\\	
\cS_3&:=\{\ket{i}_A\ket{j+1}_B\ket{0}_C \pm \ket{i+1}_A\ket{j}_B\ket{\hat{d}}_C \mid (i,j)\in  \mathbb{Z}_{\hat{d}} \times  \mathbb{Z}_{\hat{d}}   \},\\
				\cS_4&:=\{\ket{0}_A\ket{0}_B\ket{0}_C \pm \ket{2}_A\ket{3}_B\ket{2}_C  \},
				\\	
				\cS_5&:=\{ \ket{\hat{d}}_A\ket{\hat{d}}_B\ket{\hat{d}}_C \pm \ket{2}_A\ket{3}_B\ket{3}_C\}.	
			\end{aligned}
	\end{equation}}  
	\begin{figure}[h]
		\centering
		\includegraphics[scale=0.315]{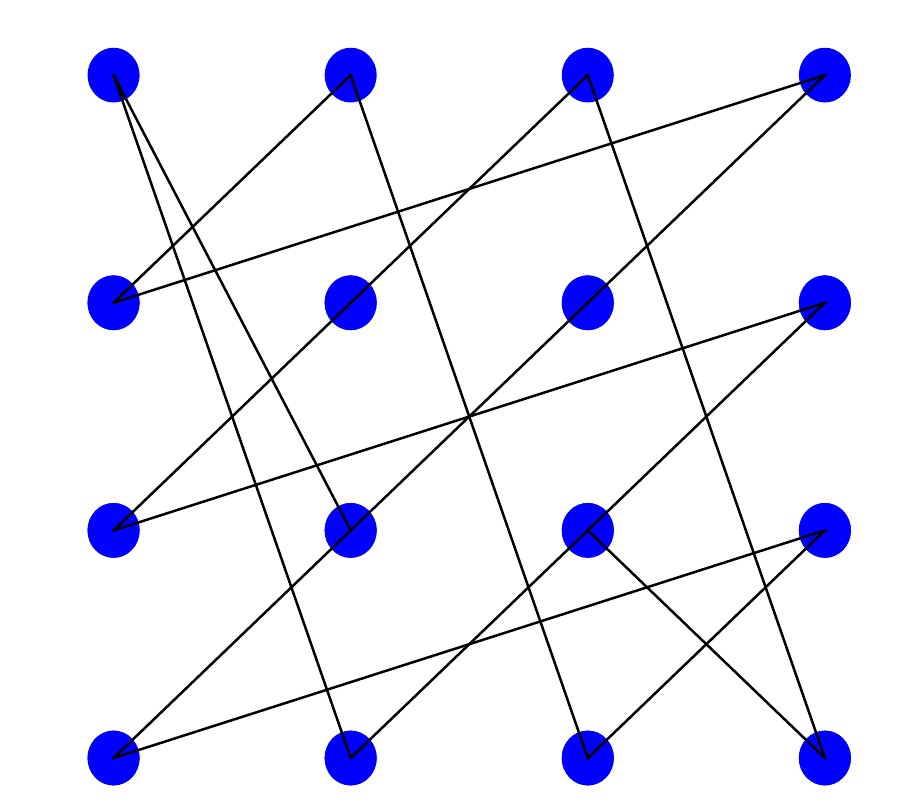}  
		\includegraphics[scale=0.285]{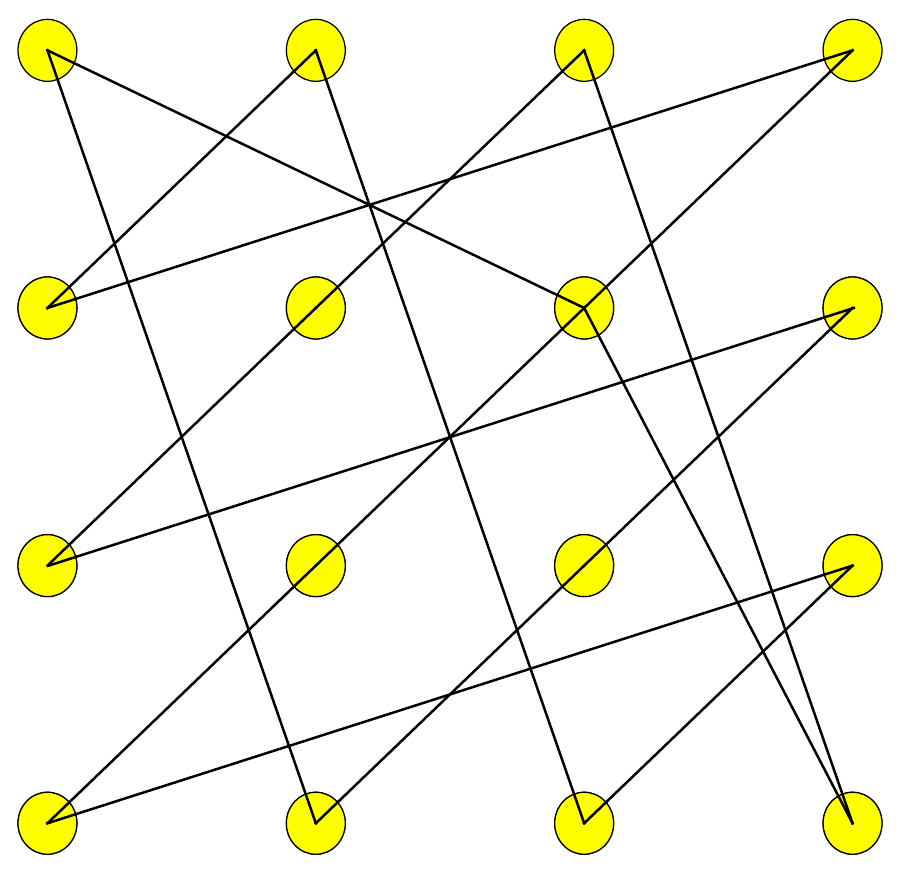}
		\includegraphics[scale=0.315]{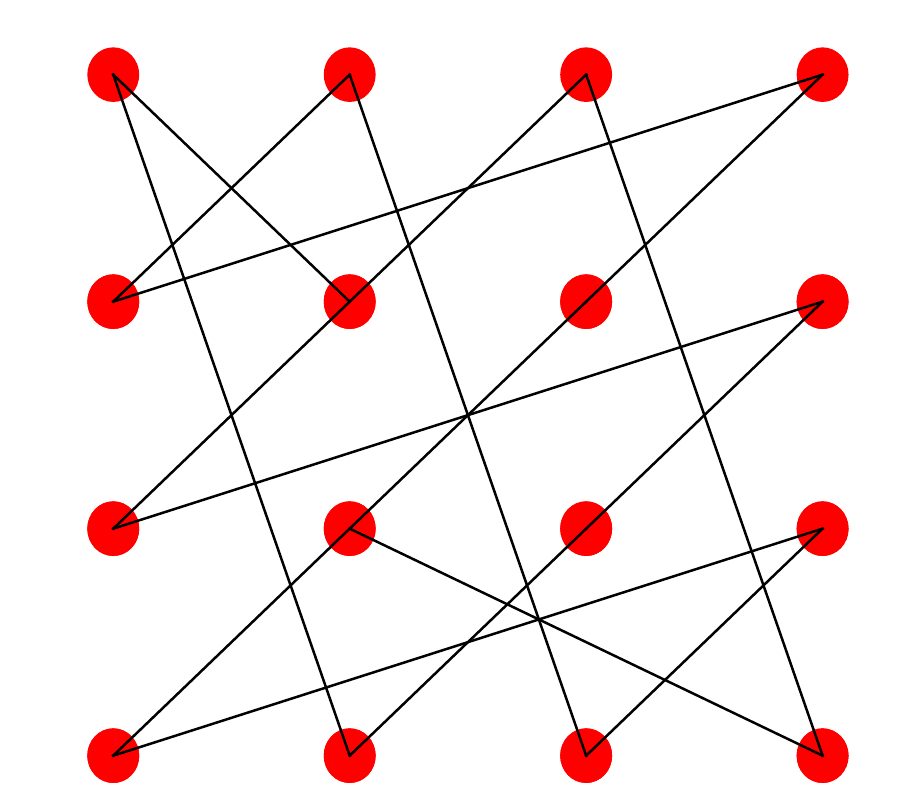}\ 
		\caption{This shows the three graphs  ($\cG_A(\cS)$ blue, $\cG_B(\cS)$ red,  $\cG_C(\cS)$ yellow) corresponds  to the OGES  of $\mathbb{C}^4\otimes \mathbb{C}^4\otimes \mathbb{C}^4$ defined by Eq. \eqref{OGESddde}. }\label{fig:444cube}
	\end{figure}
	
	The  proof of Proposition  \ref{pro:GES_SN_tri_e}   is  given  in  Appendix. Moreover, as an example,  we show the corresponding graphs of $\cS$ for $d=4$  in figure \ref{fig:444cube}.

	\section{Strongly nonlocal sets of high weighted GHZ like states}\label{third}
	In this section, we will consider the strong nonlocality of those sets  of  high weight GHZ like states. We consider a general tripartite systems $\mathbb{C}^{d_1}\otimes \mathbb{C}^{d_2}\otimes \mathbb{C}^{d_3}$ where $d_i\geq 3$  for $1\leq i\leq 3$. We also make some assumptions on the set  $\cS$ to be distinguished. The first one is that the set $\cS$ can be divided   into several $d$-tuples (maybe with different $d$ and here $d\leq \min\{d_1,d_2,d_3\}$) of  GHZ like states with weight $d$ as the following form 
	$$
	\big\{\sum_{m=1}^dw_d^{(m-1)n}|i_mj_mk_m\rangle \mid n\in\mathbb{Z}_d\}$$ 
	where $w_d:=e^{\frac{2 \pi \sqrt{-1}}{d}}.$ Such   $d$-tuples can be also regarded as $d$ linear combinations of $d$ coordinately different vectors $\{|i_mj_mk_m\rangle\}_{m=1}^d$ of the computational basis where the combination coefficients of each state are the elements in the row of the $d$ dimensional Fourier transform  $F_d:=(w_d^{(m-1)(n-1)})_{m,n=1}^d.$ We refer to such  set $\cS$ as a special set of high weighted GHZ like states on the computational basis.
	The second one is that the  set $\cC(\cS)$ defined by 
	$\{(i_m,j_m,k_m) \mid\sum_{m=1}^dw_d^{(m-1)n}|i_mj_mk_m\rangle \in \cS, n\in\mathbb{Z}_d \}$
	contains the subsets $\{i_0\}\times \mathbb{Z}_{d_2}\times\mathbb{Z}_{d_3},$   $\mathbb{Z}_{d_1}\times \{j_0\}\times\mathbb{Z}_{d_3}$  and  $\mathbb{Z}_{d_1}\times \mathbb{Z}_{d_2}\times\{k_0\}$ for some $i_0\in \mathbb{Z}_{d_1}, j_0\in \mathbb{Z}_{d_2}, k_0\in \mathbb{Z}_{d_3}.$  And we call the set  $\cS$  is plane containing if it satisfies the second condition.

	For each partition $A|BC,B|CA,$ and $C|AB$, we attach the set $\cS$ with a graph $\cG_{A}(\cS)$, $\cG_{B}(\cS)$ and $\cG_{C}(\cS)$ respectively. Here we give the exact description of $\cG_B(\cS)$ as example, $\cG_{A}(\cS)$ and $\cG_{C}(\cS)$ can be defined similarly.  The graph  $\cG_B(\cS)$ is determined by its vertexes $V_B(\cS)$  and  edges $E_B(\cS).$ Here $V_B(\cS)$ is defined to be $\mathbb{Z}_{d_3}\times \mathbb{Z}_{d_1}.$   For each $d$-tuples of GHZ like states with weight $d$, say $
	\big\{\sum_{m=1}^dw_d^{(m-1)n}|i_mj_mk_m\rangle \mid n\in\mathbb{Z}_d\}\subseteq \cS$, it contributes to the edges  $\{(k_m,i_m),(k_n,i_n)\}$ whenever $1\leq m\neq n\leq d$ (which is equivalent to that the subgraph with nodes $\{(k_m,i_m)\}_{m=1}^d$ is a complete graph).
	The edges  $E_B(\cS)$ of $\cG_{A}(\cS)$ are completely determined by the edges obtained in this form.

	\begin{theorem}\label{thm:OneUniform_SN_tri}
		Let $\cS$ be a special orthogonal   set of  high weighted GHZ like states     under computational basis that is plane containing in $\mathbb{C}^{d_1}\otimes\mathbb{C}^{d_2}\otimes \mathbb{C}^{d_3}$. If all  the    corresponding graphs of  $\cS$    are connected,  then the set  $\cS$ is of the strongest nonlocality.
	\end{theorem}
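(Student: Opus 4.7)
The plan is to mirror the proof of Theorem~\ref{GHZ_SN_tri}, with the two-term $\pm$ equations replaced by a bivariate discrete Fourier inversion suited to $d$-tuples. Without loss of generality I would take the bipartition $A|BC$ and let $B,C$ jointly perform an orthogonality-preserving POVM element $E=M^{\dagger}M$ with matrix entries $a_{jk,j'k'}$ on the computational basis of $BC$. The goal is to conclude that $E$ is proportional to $\mathbb{I}$; as in Theorem~\ref{GHZ_SN_tri} I would split this into two steps, first that $E$ is diagonal and then that all of its diagonal entries coincide.

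For the off-diagonal entries, fix distinct $(j,k),(j',k')\in\mathbb{Z}_{d_2}\times\mathbb{Z}_{d_3}$. The plane-containing hypothesis places both $(i_0,j,k)$ and $(i_0,j',k')$ in $\cC(\cS)$, so each lies in some $d$-tuple of $\cS$; call these $D_1$ and $D_2$. They must be distinct, since two basis triples inside one tuple cannot share an $A$-coordinate by the coordinately-different condition. Writing $D_1=\{|\psi^{(1)}_n\rangle=\sum_p w_d^{(p-1)n}|i^{(1)}_p j^{(1)}_p k^{(1)}_p\rangle\}_{n\in\mathbb{Z}_d}$ and analogously $D_2$ of size $d'$, the constraints $\langle\psi^{(2)}_{n'}|\mathbb{I}_A\otimes E|\psi^{(1)}_n\rangle=0$ indexed by $(n,n')\in\mathbb{Z}_d\times\mathbb{Z}_{d'}$ form a bivariate DFT of the coefficient matrix $c_{p,p'}:=\delta_{i^{(1)}_p,\,i^{(2)}_{p'}}\,a_{j^{(2)}_{p'}k^{(2)}_{p'},\,j^{(1)}_p k^{(1)}_p}$ being identically zero. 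Inverting the tensor-product transform $F_d\otimes F_{d'}^{*}$ forces every $c_{p,p'}=0$; selecting the pair with $i^{(1)}_p=i^{(2)}_{p'}=i_0$ yields $a_{j'k',jk}=0$, and hermiticity of $E$ then gives $a_{jk,j'k'}=0$.

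For the diagonal entries, fix any $d$-tuple $D$ with supporting triples $\{(i_m,j_m,k_m)\}_{m=1}^d$. Within-tuple orthogonality combined with $\delta_{i_m,i_{m'}}=\delta_{m,m'}$ (coordinately different again) collapses $\langle\psi_{n'}|\mathbb{I}_A\otimes E|\psi_n\rangle$ to the single DFT $\sum_m w_d^{(m-1)(n-n')}\,a_{j_mk_m,j_mk_m}$; requiring this to vanish at every nonzero frequency makes the vector $(a_{j_mk_m,j_mk_m})_{m=1}^d$ constant. This is exactly the statement that the complete subgraph contributed by $D$ to $\cG_A(\cS)$ carries a common diagonal value. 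Since plane-containing puts every vertex of $\cG_A(\cS)$ in at least one such subgraph and $\cG_A(\cS)$ is connected, equality propagates across all vertices, so every diagonal entry of $E$ coincides and $E\propto\mathbb{I}$.

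The principal obstacle I foresee is carefully running the bivariate Fourier inversion: one must recognise the $(n,n')$-indexed system as the action of $F_d\otimes F_{d'}^{*}$ on the matrix $C=(c_{p,p'})$, so that full vanishing on $\mathbb{Z}_d\times\mathbb{Z}_{d'}$ collapses $C$ to zero entry by entry. A closely related technicality is that distinct $d$-tuples in an orthogonal $\cS$ have disjoint supports (otherwise there could be more than one $d$-tuple containing $(i_0,j,k)$ and the choice of $D_1,D_2$ would be ambiguous); this is itself a one-shot application of the same inversion to $\langle\psi^{(2)}_{n'}|\psi^{(1)}_n\rangle=0$. Once these points are established, the rest is a direct line-by-line analogue of the proof of Theorem~\ref{GHZ_SN_tri}, with $d\times d'$ linear relations in place of the four $\pm$ equations there, and the same reasoning applies symmetrically to the bipartitions $B|CA$ and $C|AB$.
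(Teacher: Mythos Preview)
Your proposal is correct and follows essentially the same approach as the paper: both use the plane-containing hypothesis together with the cross-tuple orthogonality relations to kill the off-diagonal entries of $E$, and then use within-tuple relations plus connectivity of $\cG_A(\cS)$ to equate the diagonal entries. Your explicit framing in terms of inverting $F_d\otimes F_{d'}^{*}$ is just a more formal restatement of the paper's observation that $|i_0kl\rangle$ and $|i_0k'l'\rangle$ lie in the spans of $\{|\psi_n\rangle\}$ and $\{|\phi_{n'}\rangle\}$ respectively, so the arguments coincide.
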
	
	
	The proof  of  Theorem \ref{thm:OneUniform_SN_tri}  is  given  in  Appendix. Note   if a subgraph (which contains all vertexes)  of a  graph is connected then the graph itself is also connected. Now we define a subgraph $\hat{\cG}_B(\cS)$ of ${\cG}_B(\cS)$. The nodes of $\hat{\cG}_B(\cS)$ are also $V_B(\cS)$. But the edges of $\hat{\cG}_B(\cS)$ come  from the following way:  For each $d$-tuples of GHZ like states with weight $d$, say $
	\big\{\sum_{m=1}^dw_d^{(m-1)n}|i_mj_mk_m\rangle \mid n\in\mathbb{Z}_d\}\subseteq \cS$ (by reordering, one can assume that $k_m$is incremental), it contributes to the edges  $\{(k_m,i_m),(k_{m+1},i_{m+1})\}$ whenever $1\leq m\leq d-1$. $\hat{\cG}_A(\cS)$ and $\hat{\cG}_C(\cS)$  can be defined similarly. We present an construction of a basis of GHZ like states with weight $4$ in $\mathbb{C}^4\otimes \mathbb{C}^4\otimes \mathbb{C}^4$ whose corresponding graphs are all connected.
	
	\begin{example}\label{GES_SN_444}
	In  $\mathbb{C}^4\otimes \mathbb{C}^4\otimes \mathbb{C}^4$, the set
	$\cS:=\cup_{i=1}^{16}\cS_i$  given by Table \ref{example444-uniform} is  an  \emph{OGES} of  the strongest nonlocality. The size of this set is $64$.
\end{example}	
	\begin{table}[h]	
		$\begin{array}{lcccclcccc}\hline\hline
			\text{Set}& C_1 &C_2 & C_3& C_4 & \text{Set}& C_1 &C_2 & C_3& C_4\\ \hline
			\cB_1&|000\rangle&|121\rangle&|212\rangle&|333\rangle& \cB_9&|013\rangle&|132\rangle&|201\rangle&|320\rangle\\
			\cB_2&|003\rangle&|111\rangle&|222\rangle&|330\rangle &\cB_{10}&|021\rangle&|133\rangle&|200\rangle&|312\rangle\\
			\cB_3&|030\rangle&|112\rangle&|221\rangle&|303\rangle &\cB_{11}&|022\rangle&|101\rangle&|233\rangle&|310\rangle\\
			\cB_4&|033\rangle&|122\rangle&|211\rangle&|300\rangle &\cB_{12}&|023\rangle&|100\rangle&|232\rangle&|311\rangle\\
			\cB_5&|001\rangle&|113\rangle&|230\rangle&|322\rangle &	\cB_{13}&|010\rangle&|131\rangle&|223\rangle&|302\rangle\\
			\cB_6&|002\rangle&|123\rangle&|210\rangle&|331\rangle &\cB_{14}&|020\rangle&|102\rangle&|231\rangle&|313\rangle\\
			\cB_7&|011\rangle&|103\rangle&|220\rangle&|332\rangle&\cB_{15}&|031\rangle&|110\rangle&|202\rangle&|323\rangle\\
			\cB_8&|012\rangle&|130\rangle&|203\rangle&|321\rangle&
			\cB_{16}&|032\rangle&|120\rangle&|213\rangle&|301\rangle\\
			\hline\hline
			
		\end{array}$
		\caption{This table shows the 1-uniform states in $\mathbb{C}^4\otimes\mathbb{C}^4\otimes \mathbb{C}^4$. Each $\cB_i$ contains four coordinately different elements of the computational basis. And it corresponds to four 1-uniform states $\cS_i$ which are linear combinations of the four vectors with coeffients in the rows of $F_4$.}\label{example444-uniform}

	\end{table}	
	\begin{figure}[h]
	\centering
	\includegraphics[scale=0.45]{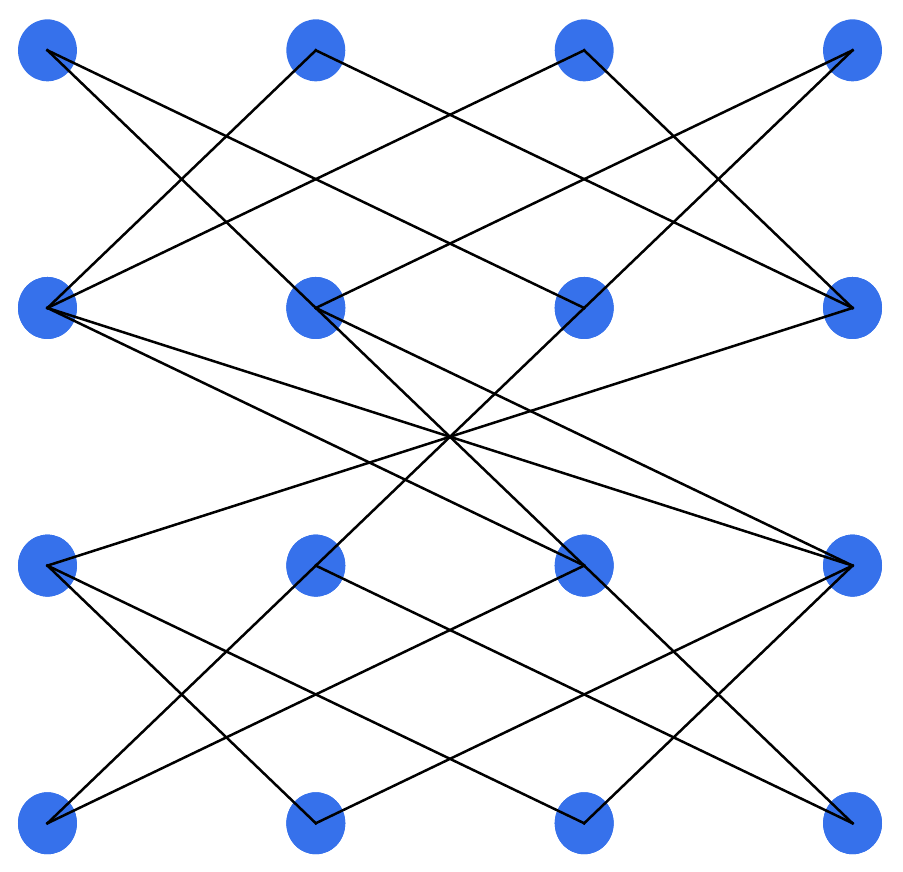}
	\caption{This shows the   graph $\hat{\cG}_A(\cS)$ corresponding  to the OGES  of $\mathbb{C}^4\otimes \mathbb{C}^4\otimes \mathbb{C}^4$ defined by Table \eqref{example444-uniform}. }\label{fig:444uniform}
\end{figure}

	We show the corresponding graph $\hat{\cG}_A(\cS)$ of $\cS$ in the      Fig. \ref{fig:444cube}. Clearly, it is connected.   One can check that the graphs $\hat{\cG}_B(\cS)$and $\hat{\cG}_C(\cS)$ share the same graph as $\hat{\cG}_A(\cS)$. Hence $\cS$ is of the strongest nonlocality.

	\section{Conclusion and Discussion}\label{fifth}
	
	We study the strongest nonlocality of set which is only containing genuinely entangled states.  We show that the   strongest nonlocality of some special set of genuinely entangled states is rather simple. In fact, it  is equivalent to check whether its corresponding graphs are connected or not. On the other hand, we can construct some special set of GHZ like states in $\mathbb{C}^d\otimes \mathbb{C}^d\otimes \mathbb{C}^d$ whose corresponding graphs are all connected. Therefore, we have successfully constructed set of  GHZ like states that is of the strongest nonlocality.  This give answers to a question asked by Shi et al. in reference \cite{Shi20S}.   This  also provides another answer,  which is different from that of  Shi et al's, to  an open question raised by Halder et al. in reference \cite{Halder19}.
	
	There are also some questions left to be considered. For examples, whether can we construct some smaller set with the strongest nonlocality via the OGES than the OPS?  Whether the absolutely entangled states can present the phenomenon of strong nonlocality for four or more parties systems?

	\vspace{2.5ex}
	
	\noindent{\bf Acknowledgments}\, \,   This  work  is supported  by  National  Natural  Science  Foundation  of  China  (11901084, 12005092,   and  U1801661), the China Postdoctoral Science Foundation (2020M681996),  
	the Research startup funds of DGUT (GC300501-103), the  Natural  Science  Foundation  of  Guang-dong  Province  (2017B030308003),   the  Key  R$\&$D  Program of   Guangdong   province   (2018B030326001),   the   Guang-dong    Innovative    and    Entrepreneurial    Research    TeamProgram (2016ZT06D348), the Science, Technology and   Innovation   Commission   of   Shenzhen   Municipality (JCYJ20170412152620376   and   JCYJ20170817105046702 and  KYTDPT20181011104202253),   the Economy, Trade  and  Information  Commission  of  Shenzhen Municipality (201901161512).
	
	\vskip 10pt

	{
		$${\text{\textbf{APPENDIX  }}}$$
	}
	\noindent{\textbf{Proof of Proposition \ref{pro:GES_SN_tri_o}}:}  The structure of the set $\cS$ has similar geometric picture  in Fig.\ref{fig:333cube}. Therefore, $\cS$ is plane containing. By  the symmetricity of the  construction of $\cS$, we only need to show that $\cG_A(\cS)=(V_A(\cS),E_A(\cS))$ is connected.
	
	By definition, $V_A(\cS)=\mathbb{Z}_d\times\mathbb{Z}_d$. From those states belong to $\cS_1$, we know that $\{(i,j+1),(i+1,j)\}\in E_A(\cS)$ for $i,j\in\mathbb{Z}_{d-1}$.
	From those states belong to $\cS_2$, we can obtain that $\{(1,j),(\hat{d},j+1)\}\in E_A(\cS)$ for $j\in\mathbb{Z}_{\hat{d}}$. Using these results, one can obtain that for any nodes $(i,j)$ and $(k,l)$ in $V_A(\cS)$, if $i+j\equiv k+l \ \mathrm{ mod }\  d$, then they must belong to some  same connected component. As a result, there are at most $d$ connected components $C_{[i]}:=\{(j,k)\in V_A(\cS)\mid j+k \equiv i \text{ mod } d\}$ for $i\in\mathbb{Z}_d$. On the other hand, the states in $\cS_3$ implies that $(j,\hat{d})$ and $(j+1,0)$   also belong to some same connected component when $j\in\mathbb{Z}_{\hat{d}}$. Therefore, $C_{[j-1]}$ and $C_{[j+1]}$   are in the same connected component for $j=0,1,\cdots,d-2$.   With these at hand, one can show that these sets are connected in the following way:		{\small\begin{equation*}\label{eq:component}
		C_{[0]}\overset{j=1}{-} C_{[2]}\overset{j=3}{-}\cdots \overset{j=d-2}{-} C_{[d-1]}\overset{j=0}{-}C_{[1]}\overset{j=2}{-}C_{[3]} \overset{j=4}{-}\cdots \overset{j=d-3}{-} C_{[d-2]}.
	\end{equation*}
}   Therefore, there are only one connected component in total and $\cG_A(\cS)$ is connected. \qed

	\vskip 8pt
	
	\noindent{\textbf{Proof of Proposition \ref{pro:GES_SN_tri_e}   }:}  For each $X=A,B,C$ and $i\in \mathbb{Z}_d$, denote  $C^X_{[i]}:=\{(j,k)\in V_X(\cS)\mid j+k \equiv i \text{ mod } d\}.$ Using $\cS_1, \cS_2$ and $\cS_3$, with similar argument as above, one can show that    $C^X_{[j-1]}$ and $C^X_{[j+1]}$   are in the same connected component for  $j=0,1,\cdots,d-2$.  As $d-2$ is even but $d-3$ is odd, we have
	{\small $$\begin{array}{l}
			C^X_{[0]}\overset{j=1}{-} C^X_{[2]}\overset{j=3}{-}\cdots \overset{j=d-3}{-} C^X_{[d-2]},\\[2mm]
			C^X_{[1]}\overset{j=2}{-}C^X_{[3]} \overset{j=4}{-}\cdots \overset{j=d-2}{-} C^X_{[d-1]}\overset{j=0}{-} C^X_{[1]}.
		\end{array}
	$$
	} 
	Hence, there are at most two connected components for each  $\cG_A(\cS),\cG_B(\cS),\cG_C(\cS)$ and  they are determined by the  parity of  $i+j$ for a node $(i,j)\in V$. We complete the proof into three cases as follows.

	\begin{enumerate}[(a)] 
		
		\item $\cG_A(\cS)$.  By those states in $\cS_4$, we know that $(0,0)$ and $(2,3)$ belong to the same connected component. Therefore, the two connected components concoide with each other. 
		\item $\cG_B(\cS)$.  By those states in $\cS_5$, we know that $(\hat{d},\hat{d})$ and $(3,2)$ belong to the same connected component. Therefore, the two connected components concoide with each other.
		
		\item $\cG_C(\cS)$.  By those states in $\cS_4$, we know that $(0,0)$ and $(2,3)$ belong to the same connected component. Therefore, the two connected components concoide with each other.
	\end{enumerate}
	
	\qed

	\vskip 5pt
	
	\noindent{\bf Proof of Theorem \ref{thm:OneUniform_SN_tri}:} 	Without loss of generality,   suppose that  $B$ and $C$ come together to   perform a joint  orthogonality-preserving POVM named $\{E=M^{\dagger}M\}$, where $E=(a_{ij,k\ell})_{i,k\in \mathbb{Z}_{d_2},j,\ell\in \mathbb{Z}_{d_3}}$. The postmeasurement states $\{\mathbb{I}_A\otimes M\ket{\psi}\big |\ket{\psi}\in \cS\}$ should be mutually orthogonal,  i.e., 
	\begin{equation}\label{eq:OGESgenrelationthm2}
		\langle \phi|\mathbb{I}_A\otimes E |\psi\rangle=0
	\end{equation}  
	for $|\psi\rangle, |\phi\rangle \in \cS$ and  $|\psi\rangle\neq |\phi\rangle.$  First, we  show that the matrix $E$ is diagonal.    For any  pair of non-equal  coordinates  $(k,l)$ and $(k',l')$  in $\mathbb{Z}_{d_2}\times \mathbb{Z}_{d_3}$,  as  $\cS$ is  plane containing,   there exist  some $i_0\in \mathbb{Z}_{d_1}$ and two sets  of high weight GHZ like states 
	$$
	\begin{array}{l}
		\{|\psi_n\rangle:=  \sum_{m=1}^dw_d^{(m-1)n}|i_mj_mk_m\rangle \mid n\in\mathbb{Z}_d\}\subseteq \cS,\\
		\{|\phi_{n'}\rangle:=  \sum_{m'=1}^{d'}w_{d'}^{(m'-1)n'}|i'_{m'}j'_{m'}k'_{m'}\rangle \mid n'\in\mathbb{Z}_{d'}\}\subseteq \cS
	\end{array}
	$$ 
	such that $|i_0kl\rangle$ and $|i_0k'l'\rangle$ is one of the summation  terms of $|\psi_n\rangle$ and $|\phi_{n'}\rangle$  respectively (they can not appear in just one  GHZ like state by our definition).     Therefore, we have	\begin{equation}\label{eq:OGESgeneraldia}
		\langle \phi_{n'}|\mathbb{I}_A\otimes E |\psi_n\rangle=0,  \text{ for }n\in \mathbb{Z}_{d}, {n'\in \mathbb{Z}_{d'}. }
	\end{equation} 
	Clearly, $|i_0kl\rangle$  and $|i_0k'l'\rangle$ can be written as  some linear combinations of $\{|\psi_n\rangle\}_{n=1}^{d}$ and $\{|\phi_{n'}\rangle\}_{n'=1}^{d'}$ respectively. Therefore, from Eqs.~\eqref{eq:OGESgeneraldia}, one obtains that  
	$$a_{k'l',kl}= {}_A\langle i_0|{}_B\langle k'|{}_C\langle l'| \mathbb{I}_A\otimes E   |i_0\rangle_A|k\rangle_B|l\rangle_C=0.$$
	Therefore, the matrix $E$ must be a diagonal matrix  under the computational basis.
	
	In the following, we prove that the elements on the diagonal of $E$ are equal.  As the graph $\cG_A(\cS)$ is connected, for any two different nodes  $(j_I,k_I)$ and $(j_L,k_L)$  of $\cG_A(\cS)$, there is a path connecting them, namely	
	\begin{equation}\label{eq:paththm2}
		(j_1,k_1)\overset{e_1}{-} (j_2,k_2)\overset{e_2}{-}\cdots \overset{e_{N-1}}{-} (j_{N},k_{N})\overset{e_N}{-}(j_{N+1},k_{N+1})
	\end{equation}  
	here  $(j_1,k_1)=(j_I,k_I)$ and $(j_{N+1},k_{N+1})=(j_L,k_L)$. By the definition of the edges of $\cG_A(\cS)$,   each edge $e_l=\{(j_l,k_l),(j_{l+1},k_{l+1})\}$ must come from some $d^{(l)}$-tuples of weight $d^{(l)}$ GHZ like states $$\{|\varphi_n^{(l)}\rangle:=  \sum_{m=1}^{d^{(l)}}w_{d^{(l)}}^{(m-1)n}|i^{(l)}_mj^{(l)}_mk^{(l)}_m\rangle \mid n\in\mathbb{Z}_{d^{(l)}}\}\subseteq \cS.$$ 
	More exactly, there exist $i_l, i_{l+1} \in \mathbb{Z}_{d_1}$ such that $|i_lj_lk_l\rangle$ and  $|i_{l+1}j_{l+1}k_{l+1}\rangle$ are two  summation terms in $|\varphi_n^{(l)}\rangle$, i.e., $|i_lj_lk_l\rangle=|i^{(l)}_{m_1}j^{(l)}_{m_1}k^{(l)}_{m_1}\rangle$ and  $|i_{l+1}j_{l+1}k_{l+1}\rangle=|i^{(l)}_{m_2}j^{(l)}_{m_2}k^{(l)}_{m_2}\rangle$.
	Appling Eq.~\eqref{eq:OGESgenrelationthm2}  to $|\varphi_0^{(l)}\rangle$ and $|\varphi_n^{(l)}\rangle$, 	 one can 
	obtain that 
	\begin{equation}\label{eq:OGESgenrelationthm2s}
		\langle \varphi_0^{(l)}|\mathbb{I}_A\otimes E |\varphi_n^{(l)}\rangle=0, \text{ for } n\in \mathbb{Z}_{d^{(l)}}\setminus \{0\}.
	\end{equation}
	As $i_m^{(l)}\neq i_n^{(l)}$ whenever $m\neq n$, from the $(d^{(l)}-1)$ equations  in Eqs.~\eqref{eq:OGESgenrelationthm2s}, one deduce that
	$$
	a_{j_m^{(l)}k_m^{(l)},j_m^{(l)}k_m^{(l)}}=a_{j_n^{(l)}k_n^{(l)},j_n^{(l)}k_n^{(l)}}, \ \ \forall m,n \in \mathbb{Z}_{d^{(l)}}.$$
	Particularly, we have $a_{j_{m_1}^{(l)}k_{m_1}^{(l)},j_{m_1}^{(l)}k_{m_1}^{(l)}}=a_{j_{m_2}^{(l)}k_{m_2}^{(l)},j_{m_2}^{(l)}k_{m_2}^{(l)}}$. Hence, $ a_{j_{l}k_l,j_lk_l}=a_{j_{l+1}k_{l+1},j_{l+1}k_{l+1}}.$
	From the path in Eq. \eqref{eq:paththm2} one could easily deduce that 
	$$ a_{j_{I}k_I,j_Ik_I}=a_{j_{L}k_{L},j_{L}k_{L}}.$$
	
	Therefore,  it can be concluded that the matrix $E$ is indeed proportional to identity operator.  \qed


\begin{thebibliography}{10}
		
		\bibitem{nils} M. A. Nielsen and I. L. Chuang. Quantum Computation and
		Quantum Information(Cambridge University Press, Cambridge, U.K., 2004).
		
		
		\bibitem{Ben99} C. H. Bennett, D. P. DiVincenzo, C. A. Fuchs, et al. Quantum nonlocality without entanglement. \href{https://doi.org/10.1103/PhysRevA.59.1070}{Phys. Rev. A {\bf 59}, 1070 (1999)}.
		
		
	
		
		
		\bibitem{Wal02} J. Walgate and L. Hardy. Nonlocality  Asymmetry  and Distinguishing Bipartite States. \href{https://doi.org/10.1103/PhysRevLett.89.147901}{Phys. Rev. Lett. {\bf 89}, 147901 (2002)}.
		
		
		
		\bibitem{Wal00}J. Walgate, A. J. Short, L. Hardy, and V. Vedral. Local
		Distinguishability of Multipartite Orthogonal Quantum States.
		\href{https://doi.org/10.1103/PhysRevLett.85.4972}{Phys. Rev. Lett. \textbf{85}, 4972 (2000)}.
		
		
		
		
		
		
		\bibitem{Gho01} S. Ghosh, G. Kar, A. Roy, A. Sen(De), and U. Sen. Distinguishability of Bell States. \href{https://doi.org/10.1103/PhysRevLett.87.277902}{Phys. Rev. Lett. \textbf{87}, 277902 (2001)}.
		
		\bibitem{Fan04}  H. Fan. Distinguishability and Indistinguishability by Local Operations and Classical Communication. \href{https://doi.org/10.1103/PhysRevLett.92.177905}{Phys. Rev. Lett. \textbf{92}, 177905 (2004)}.
		
		\bibitem{Nathanson05} M. Nathanson. Distinguishing bipartitite orthogonal states using LOCC: Best and worst cases. \href{https://doi.org/10.1063/1.1914731}{J. Math. Phys. (N.Y.) \textbf{46}, 062103 (2005)}.
		
		\bibitem{Fan07}  H. Fan. Distinguishing bipartite states by local operations and classical communication. \href{https://doi.org/10.1103/PhysRevA.75.014305}{Phys. Rev. A \textbf{75}, 014305 (2007)}.
		
		\bibitem{Cohen07}  S. M. Cohen. Local distinguishability with preservation of entanglement. \href{https://doi.org/10.1103/PhysRevA.75.052313}{Phys. Rev. A \textbf{75}, 052313 (2007)}.
		
		\bibitem{Bandyopadhyay11} S. Bandyopadhyay, S. Ghosh, and G. Kar. LOCC distinguishability of unilaterally transformable quantum states. \href{https://doi.org/10.1088/1367-2630/13/12/123013}{New J. Phys.
			\textbf{13} 123013 (2011)}.
		
		\bibitem{Yu12}  N. Yu, R. Duan, and M. Ying.  Four Locally Indistinguishable Ququad-Ququad Orthogonal Maximally Entangled States. \href{https://doi.org/10.1103/PhysRevLett.109.020506}{Phys. Rev. Lett. \textbf{109},
			020506 (2012)}.
		
		\bibitem{Cos13} A. Cosentino. Positive partial transpose indistinguishable
		states via semidefinite programming. \href{https://doi.org/10.1103/PhysRevA.87.012321}{Phys. Rev. A \textbf{87}, 012321
			(2013)}.
		
		
		\bibitem{Li15} M.-S. Li, Y.-L. Wang, S.-M. Fei and Z.-J. Zheng. $d$ locally indistinguishable maximally entangled states in $\mathbb{C}^d\otimes\mathbb{C}^d$. \href{https://doi.org/10.1103/PhysRevA.91.042318}{Phys.
			Rev. A \textbf{91}, 042318 (2015)}.
		
		
		
		\bibitem{Yu115}  S. X. Yu and C. H. Oh. Detecting the local indistinguishability of maximally entangled states. \href{http://arxiv.org/abs/arXiv:1502.01274v1}{arXiv:1502.01274v1}.
		
		\bibitem{Wang19}Y.-L. Wang, M.-S. Li, and Z.-X. Xiong. One-way local distinguishability of generalized Bell states in arbitrary dimension. \href{https://doi.org/10.1103/PhysRevA.99.022307}{Phys. Rev. A \textbf{99}, 022307 (2019).}
		
		\bibitem{Xiong19}Z.-X. Xiong, M.-S. Li, Z.-J. Zheng, et al. Positive-partial-transpose distinguishability for lattice-type maximally entangled states. \href{https://journals.aps.org/pra/abstract/10.1103/PhysRevA.99.032346}{Phys. Rev. A \textbf{99}, 032346 (2019).}
		
		
		\bibitem{Li20}M.-S. Li, S.-M. Fei, Z.-X. Xiong,  and Y.-L. Wang. Twist-teleportation-based local discrimination of maximally entangled states.  \href{https://link.springer.com/article/10.1007\%2Fs11433-020-1562-4}{SCIENCE CHINA Physics, Mechnics $\&$  Astronomy \textbf{63} 8,  280312 (2020).}
		
		\bibitem{Ran04}S. De Rinaldis. Distinguishability of complete and unextendible
		product bases. \href{https://doi.org/10.1103/PhysRevA.70.022309}{Phys. Rev. A \textbf{70}, 022309 (2004)}.
		
		
		
		\bibitem{Hor03} M. Horodecki, A. Sen(De), U. Sen, and K. Horodecki. Local Indistinguishability: More Nonlocality with Less Entanglement.
		\href{https://doi.org/10.1103/PhysRevLett.90.047902}{Phys. Rev. Lett. {\bf 90}, 047902 (2003)}.
		
		
		\bibitem{Ben99b} C. H. Bennett, D. P. DiVincenzo, T. Mor, et al.  Unextendible Product Bases and Bound Entanglement. \href{https://doi.org/10.1103/PhysRevLett.82.5385}{Phys. Rev. Lett. \textbf{82}, 5385
			(1999)}.
		
		\bibitem{DiVincenzo03}  D. P. DiVincenzo, T. Mor, P. W. Shor, J. A. Smolin, and
		B. M. Terhal. Unextendible product bases, uncompletable product bases and bound entanglement. \href{https://doi.org/10.1007/s00220-003-0877-6}{Comm. Math. Phys. \textbf{238}, 379 (2003)}.
		
		
		\bibitem{Feng09} Y. Feng and Y.-Y. Shi. Characterizing Locally Indistinguishable Orthogonal Product States. \href{https://doi.org/10.1109/TIT.2009.2018330}{IEEE Trans. Inf. Theory \textbf{55}, 2799 (2009)}.
		
		\bibitem{Yang13} Y.-H. Yang, F. Gao, G.-J. Tian, et al. Local distinguishability of orthogonal quantum states in a $2\otimes 2\otimes 2$ system. \href{https://doi.org/10.1103/PhysRevA.88.024301}{Phys. Rev. A \textbf{88}, 024301 (2013)}.
		
		\bibitem{Zhang14}  Z.-C. Zhang, F. Gao, G.-J. Tian, et al. Nonlocality of orthogonal product basis quantum states. \href{https://doi.org/10.1103/PhysRevA.90.022313}{Phys. Rev. A \textbf{ 90}, 022313 (2014)}.
		
		\bibitem{Zhang15} Z.-C. Zhang, F. Gao, S.-J. Qin, et al.
		Nonlocality of orthogonal product states. \href{https://doi.org/10.1103/PhysRevA.92.012332}{Phys. Rev. A \textbf{92},
			012332 (2015)}.
		
		\bibitem{Wang15} Y.-L. Wang, M.-S. Li, Z.-J. Zheng, and S.-M. Fei. Nonlocality
		of orthogonal product-basis quantum states. \href{https://doi.org/10.1103/PhysRevA.92.032313}{Phys. Rev. A \textbf{92},
			032313 (2015)}.
		
		
		\bibitem{Zhang16} Z.-C. Zhang, F. Gao, Y. Cao, S.-J. Qin, and Q.-Y. Wen. Local indistinguishability of orthogonal product states. \href{https://doi.org/10.1103/PhysRevA.93.012314}{ Phys. Rev. A \textbf{93}, 012314 (2016)}.
		
		
		
		\bibitem{Xu16b}   G.-B. Xu, Y.-H. Yang, Q.-Y. Wen, S.-J. Qin, and F. Gao. Locally indistinguishable orthogonal product bases in arbitrary bipartite quantum system. \href{https://doi.org/10.1038/srep31048}{Sci. Rep. \textbf{6}, 31048 (2016)}.
		
		\bibitem{Xu16m}
		G.-B. Xu, Q.-Y. Wen, S.-J. Qin, et al. Quantum nonlocality of multipartite orthogonal product states. \href{https://doi.org/10.1103/PhysRevA.93.032341}{Phys. Rev. A \textbf{93}, 032341 (2016)}.
		
		
		
		\bibitem{Zhang16b}
		X.-Q. Zhang, X.-Q. Tan, J. Weng, and Y.-J. Li. LOCC indistinguishable orthogonal product quantum states. \href{https://doi.org/10.1038/srep28864}{Sci. Rep. \textbf{6}, 28864 (2016)}.
		
		

		
		
		
		
		
		\bibitem{Zhang17}Z.-C. Zhang, K.-J. Zhang, F. Gao, et al.
		Construction of nonlocal multipartite quantum states. \href{https://doi.org/10.1103/PhysRevA.95.052344}{Phys. Rev. A \textbf{95}, 052344 (2017)}.
		

		
		\bibitem{Halder18} S. Halder. Several nonlocal sets of multipartite pure orthogonal product states.  \href{https://doi.org/10.1103/PhysRevA.98.022303}{Phys. Rev. A \textbf{98}, 022303 (2018)}.
		
		\bibitem{Li18}M.-S. Li and Y.-L. Wang. Alternative method for deriving nonlocal multipartite product states. \href{https://journals.aps.org/pra/abstract/10.1103/PhysRevA.98.052352}{Phys. Rev. A \textbf{98}, 052352 (2018).}
		
		
		\bibitem{Halder1909}S. Halder, and C. Srivastava. Locally distinguishing quantum states with limited classical communication.  	\href{10.1103/PhysRevA.101.052313}{Phys. Rev. A {\bf 101}, 052313 (2020).}
		
			\bibitem{Rout1909}S. Rout, A. G. Maity, A. Mukherjee, S. Halder, and M. Banik. Genuinely nonlocal product bases: Classification and entanglement-assisted discrimination. \href{https://journals.aps.org/pra/abstract/10.1103/PhysRevA.100.032321}{ Phys. Rev. A \textbf{100}, 032321 (2019).}
	
		
		\bibitem{Xu20b} D.-H. Jiang, and G.-B. Xu. Nonlocal sets of orthogonal product states in arbitrary multipartite quantum system. \href{https://journals.aps.org/pra/abstract/10.1103/PhysRevA.102.032211}{Phys. Rev. A  \textbf{102}, 032211 (2020).}
		
		
			\bibitem{Terhal01} B. M. Terhal, D. P. DiVincenzo, and D. W. Leung. Hiding Bits
		in Bell States. \href{https://doi.org/10.1103/PhysRevLett.86.5807}{Phys. Rev. Lett. \text{86}, 5807 (2001).}
		
		\bibitem{DiVincenzo02}
		D. P. DiVincenzo, D.W.  Leung and B.M. Terhal. Quantum data hiding.
		\href{https://ieeexplore.ieee.org/document/985948/}{IEEE Trans. Inf. Theory  \textbf{48}, 580 (2002)}.
		
		\bibitem{Markham08}
		D. Markham  and B. C. Sanders. Graph States for Quantum Secret Sharing.
		\href{https://doi.org/10.1103/PhysRevA.78.042309}{Phys. Rev. A  \textbf{78}, 042309 (2008)}.
		
		
		
		\bibitem{Rahaman15}  R. Rahaman and M. G. Parker. Quantum scheme for secret
		sharing based on local distinguishability. \href{https://doi.org/10.1103/PhysRevA.91.022330}{Phys. Rev. A \textbf{91},
			022330 (2015).}
		
		\bibitem{WangJ17} J. Wang, L. Li, H. Peng, and Y. Yang. Quantum-secret-sharing
		scheme based on local distinguishability of orthogonal multiqudit entangled states. \href{https://doi.org/10.1103/PhysRevA.95.022320}{Phys. Rev. A \textbf{95}, 022320 (2017).}
		
		
		
		\bibitem{Halder19}S. Halder, M. Banik, S. Agrawal, and S. Bandyopadhyay. Strong Quantum Nonlocality without Entanglement. 	\href{https://journals.aps.org/prl/abstract/10.1103/PhysRevLett.122.040403}{Phys. Rev. Lett. \textbf{122}, 040403 (2019).}
		
			\bibitem{Tian20} P. Yuan, G. J. Tian, and X. M. Sun. Strong quantum nonlocality without entanglement in multipartite quantum systems. 	\href{https://journals.aps.org/pra/abstract/10.1103/PhysRevA.102.042228}{Phys. Rev. A \textbf{102},  042228 (2020).}
			
		\bibitem{Shi21} F. Shi, M.-S. Li, M. Hu, L. Chen, et al.    Towards constructing  orthogonal product sets of strong quantum nonlocality. 	{arXiv:2101.00735.}	
	
		\bibitem{Zhang1906}Z.-C. Zhang and X. Zhang. Strong quantum nonlocality in multipartite quantum systems. \href{https://journals.aps.org/pra/abstract/10.1103/PhysRevA.99.062108}{Phys. Rev. A \textbf{99}, 062108 (2019).}
		
		
		
		
		
 
		
	
		
		\bibitem{Shi20S}     F. Shi, M. Hu, L. Chen, and X. Zhang. Strong quantum nonlocality with entanglement. \href{https://journals.aps.org/pra/abstract/10.1103/PhysRevA.102.042202}{Phys. Rev. A \textbf{102},  042202 (2020).}
		
		
	\bibitem{Reinhard05}	D. Reinhard. Graph Theory. (3rd ed.  Berlin, New York: Springer-Verlag 2005).  
		
	\bibitem{Markiewicz13}	M. Markiewicz, W. Laskowski, T. Paterek,  and M. Zukowski. Detecting genuine multipartite entanglement of pure states with bipartite correlations.  \href{https://journals.aps.org/pra/abstract/10.1103/PhysRevA.87.034301}{Phys. Rev. A \textbf{87}, 034301 (2013).}
		
		\bibitem{Goyeneche14}D. Goyeneche and K.  \.{Z}yczkowski. Genuinely multipartite entangled states and orthogonal arrays. \href{https://journals.aps.org/pra/abstract/10.1103/PhysRevA.90.022316}{Phys. Rev. A \textbf{90}, 022316 (2014).}
		
		
		
				\bibitem{Facchi08}
		P. Facchi, G. Florio, G. Parisi, and S. Pascazio. Maximally multipartite entangled states. \href{https://journals.aps.org/pra/abstract/10.1103/PhysRevA.77.060304}{ Phys. Rev. A \textbf{77}, 060304 (2008).}
		
		\bibitem{Arnaud13}
		L. Arnaud and N. Cerf. Exploring pure quantum states with maximally mixed reductions. \href{https://journals.aps.org/pra/abstract/10.1103/PhysRevA.90.022316}{Phys. Rev. A \textbf{87}, 012319 (2013).}
		

		
		
			\bibitem{Huber18}
		F. Huber, C. Eltschka, J. Siewert, and O. G\"{u}hne. Bounds on absolutely maximally entangled states from shadow inequalities, and the quantum MacWilliams identity. \href{http://iopscience.iop.org/article/10.1088/1751-8121/aaade5/meta}{J. Phys. A \textbf{51}, 175301 (2018).}
		
		
		
		
		
		
		
	\end{thebibliography}
\end{document}